\documentclass[final,authoryear]{elsarticle}

%% Use the option review to obtain double line spacing
%% \documentclass[preprint,review,12pt]{elsarticle}

%% Use the options 1p,twocolumn; 3p; 3p,twocolumn; 5p; or 5p,twocolumn
%% for a journal layout:
%% \documentclass[final,1p,times]{elsarticle}
%% \documentclass[final,1p,times,twocolumn]{elsarticle}
%% \documentclass[final,3p,times]{elsarticle}
%% \documentclass[final,3p,times,twocolumn]{elsarticle}
%% \documentclass[final,5p,times]{elsarticle}
%% \documentclass[final,5p,times,twocolumn]{elsarticle}

%% if you use PostScript figures in your article
%% use the graphics package for simple commands
%% \usepackage{graphics}
%% or use the graphicx package for more complicated commands
\usepackage{graphicx}
%% or use the epsfig package if you prefer to use the old commands
%% \usepackage{epsfig}
%% The amssymb package provides various useful mathematical symbols
\usepackage{amssymb}
\usepackage{amscd}
\usepackage{amsmath}
\usepackage{amsfonts}
\usepackage{bm}
\usepackage{bbm}
\usepackage{latexsym}
\usepackage{mathrsfs}
\usepackage{subfigure}
\usepackage{amsthm}
\usepackage{booktabs}
\usepackage{url}
\usepackage{color}
\usepackage{rotating}
\usepackage{mathtools}
\usepackage{multirow}
\usepackage{epsfig}
%\usepackage[section]{algorithm}		% to use for the algorithm environment
%\usepackage{authblk}
%\usepackage{cite}
%% The amsthm package provides extended theorem environments
%% \usepackage{amsthm}

%% The lineno packages adds line numbers. Start line numbering with
%% \begin{linenumbers}, end it with \end{linenumbers}. Or switch it on
%% for the whole article with \linenumbers after \end{frontmatter}.
%% \usepackage{lineno}

%% natbib.sty is loaded by default. However, natbib options can be
%% provided with \biboptions{...} command. Following options are
%% valid:

%%   round  -  round parentheses are used (default)
%%   square -  square brackets are used   [option]
%%   curly  -  curly braces are used      {option}
%%   angle  -  angle brackets are used    <option>
%%   semicolon  -  multiple citations separated by semi-colon
%%   colon  - same as semicolon, an earlier confusion
%%   comma  -  separated by comma
%%   numbers-  selects numerical citations
%%   super  -  numerical citations as superscripts
%%   sort   -  sorts multiple citations according to order in ref. list
%%   sort&compress   -  like sort, but also compresses numerical citations
%%   compress - compresses without sorting
%%
\biboptions{round}

% \biboptions{}

% :::::::::::::::::::::::::::::::::::::::::::::::::::::::::::::::::::::::
% INPUT DEFINITIONS
% :::::::::::::::::::::::::::::::::::::::::::::::::::::::::::::::::::::::
% :::::::::::::::::::::::::::::::::::::::::::::::::::::::::::::::::::::::
% THEOREMS AND PROOFS
% :::::::::::::::::::::::::::::::::::::::::::::::::::::::::::::::::::::::

\newtheorem{definition}{Definition}[section]

\newtheorem{lemma}{Lemma}[section]

%\newtheorem{algorithm}{Algorithm}[section]
% :::::::::::::::::::::::::::::::::::::::::::::::::::::::::::::::::::::::
% GREEK SYMBOLS
% :::::::::::::::::::::::::::::::::::::::::::::::::::::::::::::::::::::::

\def\bbeta{\mbox{\boldmath $\beta$}}

\def\bgamma{\mbox{\boldmath $\gamma$}}

\def\bmu{\mbox{\boldmath $\mu$}}

\def\bomega{\mbox{\boldmath $\omega$}} 
\def\bOmega{\mbox{\boldmath $\Omega$}}

\def\bphi{\mbox{\boldmath $\phi$}}

\def\btheta{\mathbf{\theta}}

\def\bSigma{\mathbf{\Sigma}}

\def\bXi{\mathbf{\Xi}}
% :::::::::::::::::::::::::::::::::::::::::::::::::::::::::::::::::::::::
% LETTERS
% :::::::::::::::::::::::::::::::::::::::::::::::::::::::::::::::::::::::

\def\bh{\mathbf{h}}

\def\bu{\mathbf{u}} 

\def\by{\mathbf{y}} 
\def\bY{\mathbf{Y}}
\def\0{\mbox{\bf{0}}}

\def\bI{\mathbf{I}}

\def\bz{\mathbf{z}}
\def\bZ{\mathbf{Z}}

\def\bX{\mathbf{X}} 
\def\bx{\mathbf{x}}
\def\bO{\mathbf{0}}

%
%
% :::::::::::::::::::::::::::::::::::::::::::::::::::::::::::::::::::::::
% CAPITAL LETTERS
% :::::::::::::::::::::::::::::::::::::::::::::::::::::::::::::::::::::::

\def\bB{\mathbf{B}}
\def\bD{\mathbf{D}}

% :::::::::::::::::::::::::::::::::::::::::::::::::::::::::::::::::::::::
% LETTERS
% :::::::::::::::::::::::::::::::::::::::::::::::::::::::::::::::::::::::

%
%
% :::::::::::::::::::::::::::::::::::::::::::::::::::::::::::::::::::::::
% CAPITAL LETTERS
% :::::::::::::::::::::::::::::::::::::::::::::::::::::::::::::::::::::::

\def\sE{\mathsf{E}}

\def\sP{\mathsf{P}}

\def\sS{\mathsf{S}}

% :::::::::::::::::::::::::::::::::::::::::::::::::::::::::::::::::::::::
% CALLIGRAPHIC SYMBOLS
% :::::::::::::::::::::::::::::::::::::::::::::::::::::::::::::::::::::::

% :::::::::::::::::::::::::::::::::::::::::::::::::::::::::::::::::::::::
% OTHER SYMBOLS
% :::::::::::::::::::::::::::::::::::::::::::::::::::::::::::::::::::::::

% :::::::::::::::::::::::::::::::::::::::::::::::::::::::::::::::::::::::
% SF
% :::::::::::::::::::::::::::::::::::::::::::::::::::::::::::::::::::::::

% :::::::::::::::::::::::::::::::::::::::::::::::::::::::::::::::::::::::
% MATRICES
% :::::::::::::::::::::::::::::::::::::::::::::::::::::::::::::::::::::::

\def\diag{\mbox{diag}}

% :::::::::::::::::::::::::::::::::::::::::::::::::::::::::::::::::::::::
% MOMENTS
% :::::::::::::::::::::::::::::::::::::::::::::::::::::::::::::::::::::::

\def\r{\mbox{$r$}}

% :::::::::::::::::::::::::::::::::::::::::::::::::::::::::::::::::::::::
% DISTRIBUTIONS
% :::::::::::::::::::::::::::::::::::::::::::::::::::::::::::::::::::::::

% :::::::::::::::::::::::::::::::::::::::::::::::::::::::::::::::::::::::
% OTHER SYMBOLS
% :::::::::::::::::::::::::::::::::::::::::::::::::::::::::::::::::::::::

\newcommand{\real}{\mathbb{R}}

% :::::::::::::::::::::::::::::::::::::::::::::::::::::::::::::::::::::::
% QUOTATIONS
% :::::::::::::::::::::::::::::::::::::::::::::::::::::::::::::::::::::::

\newcommand\bbone{\ensuremath{\mathbbm{1}}}

\def\qmo{``}

\def\qmcsp{'' }

% Swap the definition of \abs*, so that \abs
% resizes the size of the brackets, and the 
% starred version does not.
%\let\oldabs\abs
%\def\abs{\@ifstar{\oldabs}{\oldabs*}}

%\newtheorem{theorem}{Theorem}
%\newtheorem{lemma}[theorem]{Lemma}
%\newtheorem{corollary}[theorem]{Corollary}
%\newtheorem{proposition}[theorem]{Proposition}
%\newdefinition{remark}{Remark}
%\newproof{proof}{Proof}
%\newproof{pot}{Proof of Theorem \ref{thm2}}

%\journal{Statistics and Probability Letters}
%\journal{}

\begin{document}

\begin{frontmatter}

%% Title, authors and addresses

%% use the tnoteref command within \title for footnotes;
%% use the tnotetext command for the associated footnote;
%% use the fnref command within \author or \address for footnotes;
%% use the fntext command for the associated footnote;
%% use the corref command within \author for corresponding author footnotes;
%% use the cortext command for the associated footnote;
%% use the ead command for the email address,
%% and the form \ead[url] for the home page:
%%
%% \title{Title\tnoteref{label1}}
%% \tnotetext[label1]{}
%% \author{Name\corref{cor1}\fnref{label2}}
%% \ead{email address}
%% \ead[url]{home page}
%% \fntext[label2]{}
%% \cortext[cor1]{}
%% \address{Address\fnref{label3}}
%% \fntext[label3]{}

%:::::::::::::::::::::::::::::::::::::::::::::::::::::::::::::::
% TITLE
%:::::::::::::::::::::::::::::::::::::::::::::::::::::::::::::::
\title{\LARGE Bayesian Robust Quantile Regression}

%:::::::::::::::::::::::::::::::::::::::::::::::::::::::::::::::
% AUTHORS
%:::::::::::::::::::::::::::::::::::::::::::::::::::::::::::::::
\author[bernardi]{Mauro Bernardi}
\ead[bernardi]{mauro.bernardi@unipd.it}
\author[bottone]{Marco Bottone}
\ead[bottone]{marco.bottone@uniroma1.it}
\author[petrella]{Lea Petrella}
\ead[petrella]{lea.petrella@uniroma1.it}

%:::::::::::::::::::::::::::::::::::::::::::::::::::::::::::::::
% AFFILIATIONS
%:::::::::::::::::::::::::::::::::::::::::::::::::::::::::::::::
\address[bernardi]{Department of Statistical Sciences, University of Padova, Padua, Italy.}
\address[bottone]{Bank of Italy and Department of Methods and Models for Economics, Territory and Finance, Sapienza University of Rome, Rome, Italy.}
\address[petrella]{Department of Methods and Models for Economics, Territory and Finance\\ Sapienza University of Rome, Rome, Italy.}

\begin{abstract}
Traditional Bayesian quantile regression relies on the Asymmetric Laplace distribution (ALD) mainly because of its satisfactory empirical and theoretical performances. 
However, the ALD displays medium tails and it is not suitable for data characterized by strong deviations from the Gaussian hypothesis. In this paper, we propose an extension of the ALD Bayesian quantile regression framework to account for fat--tails using the Skew Exponential Power (SEP) distribution.
Beside having the $\tau$-level quantile as parameter, the SEP distribution has an additional key parameter governing the decay of the tails, making it attractive for robust modeling of conditional quantiles at different confidence levels. 
Linear and Generalized Additive Models (GAM) with penalized spline are considered to show the flexibility of the SEP in the Bayesian quantile regression context.
Lasso priors are considered in both cases to account for shrinking parameters problem when the parameters space becomes wide. 
To implement the Bayesian inference we propose a new adaptive Metropolis--Hastings algorithm in the linear model and an adaptive Metropolis within Gibbs one in the  GAM framework.
Empirical evidence of the statistical properties of the proposed SEP Bayesian quantile regression method is provided through several example based on simulated and real dataset.
\end{abstract}
\begin{keyword}
%%% keywords here, in the form: keyword \sep keyword
Bayesian quantile regression, robust methods, Skew Exponential Power, GAM.
%
%%% MSC codes here, in the form: \MSC code \sep code
%%% or \MSC[2008] code \sep code (2000 is the default)
%
\end{keyword}
\end{frontmatter}
%
%:::::::::::::::::::::::::::::::::::::::::::::::::::::::::::::::
% SECTION: INTRODUCTION
%:::::::::::::::::::::::::::::::::::::::::::::::::::::::::::::::
\section{Introduction}
\label{sec:intro}
%:::::::::::::::::::::::::::::::::::::::::::::::::::::::::::::::
%
\noindent Quantile regression has become a very popular approach to provide a wide description of the distribution of a response variable conditionally on a set of regressors. While linear regression analysis aims to estimate the conditional mean of a variable of interest, in quantile regression we may estimate any conditional quantile of order $\tau$ with $\tau \in (0, 1)$ .  
Since the seminal works of Koenker and Basset \citeyearpar{koenker_basset.1978} and Koenker and Machado \citeyearpar{koenker_machado.1999}, several papers have been proposed in literature considering the quantile regression analysis both from a frequentist and a Bayesian points of view.
For the former, following Koenker \citeyearpar{koenker.2005} and the references therein, the estimation strategy relies on the minimization of a given loss function. From the Bayesian point of view Yu and Moyeed \citeyearpar{yu_moyeed.2001} introduced the ALD as likelihood tool to perform the inference. After that a wide Bayesian literature has been growing on quantile regression and ALD see for example Dunson and Taylor \citeyearpar{dunson_taylor.2005}, Kottas and Gelfand \citeyearpar{kottas_gelfand.2001}, Kottas and Krnjiajic \citeyearpar{kottas_krnjajic.2009}, Thomson et al. \citeyearpar{thompson_etal.2010}, Salazar et al \citeyearpar{salazar_etal.2012} Lum and Gelfand \citeyearpar{lum_gelfand.2012}, Sriram et al \citeyearpar{sriram_etal.2013} and Bernardi et al. \citeyearpar{bernardi_etal.2015}. Although the ALD is widely used in the Bayesian framework it has the main disadvantage of displaying  medium tails which may give misleading informations for extreme quantiles in particular when the data are characterized by the presence of outliers and heavy tails. In fact the absence for the ALD of a parameter governing the tails fatness may influence the final inference.
Recently Wichitaksorn et al. \citeyearpar{wichitaksorn_etal.2014} tried to generalize the classical Bayesian quantile regression by using some skew distributions obtained through mixture of scaled Normal ones. This class of distributions allows for different degrees of asymmetry of the response variable but they all impose a given structure of the tails. To overcome this drawback we propose an extension of the Bayesian quantile regression by using the Skew Exponential Power (SEP) distribution proposed in  Zhu and Zinde--Walsh, \citeyear{zhu_zinde.2009}.  The SEP distribution, like the ALD, has the property of having the $\tau$-level quantile as the natural location parameter but it also has an additional parameter (the shape parameter) governing the decay of the tails. Using the proposed distribution in quantile regression we are able to robustify the inference in particular when outliers or extreme values are observed. In linear regression analysis several works have extensively considered the non skewed version of the SEP i.e. the Exponential Power distribution (EP), for the related robustness properties given by the shape parameter. Box and Tiao \citeyearpar{box_tiao.1973} first show how to robustify the classical Gaussian linear regression model introducing the EP as distribution assumption for the error term. Choy and Smith \citeyearpar{choy_etal.1997}, explore the robustness properties of posterior moment based on the EP distribution, while Choy and Walker \citeyearpar{choy_walker.2003} present further extension of the work of Choy and Smith \citeyearpar{choy_etal.1997} introducing the case in which the shape parameter assumes values greater than two.

Finally, \cite{naranjo_etal.2015} and \cite{kobayashi.2016} consider the use of the SEP distribution in the regression and stochastic volatility models.
For the best of our knowledge this is the first attempt to consider the SEP distribution in order to provide a robust framework for quantile regression analysis. \newline
%%%%
\indent In this paper we propose to use of the SEP distribution to develop a Bayesian robust quantile regression framework. In particular due to the specific characteristics of the SEP distribution we will show how to estimate the quantile function firstly considering the simple linear regression problem then extending it to the generalized additive models (GAM) one. For the latter case we will adopt the  Penalized Spline (P--Spline) approach to carry out the statistical inference.
The Bayesian paradigm is implemented by means of a new adaptive Metropolis MCMC sampling scheme with a full set of informative prior. In particular, for the GAM framework, the proposed algorithm turns into an Adaptive Metropolis within Gibbs MCMC for an efficient estimate of the penalization parameter and the P--Spline coefficients.\newline
When dealing with model building the choice of appropriate predictors and consequently the variable selection issue plays an important role. In this paper we approach the problem in the Bayesian quantile regression framework, by considering the Bayesian version of Lasso penalization methodology introduced by  \cite{tibshirani.1996}. In particular for the linear quantile regression model we will assume, as prior distribution on each regressors, the  generalized version of the univariate independent Laplace distribution proposed by \cite{park_casella.2008} and \cite{hans.2009} already considered in \cite{alhamzawi_etal.2012}. With this prior we shrink each parameter separately. As second step, when dealing with the GAM models we generalize the  \cite{lang_brezger.2004} second order random walk prior for the Spline coefficients assuming a multivariate Laplace distribution accounting for a correlation structure among parameters. 
This prior corresponds to the group lasso penalty one of  \cite{yuan_lin.2006}, \cite{meier_etal.2008} and \cite{li_etal.2010} which in the spline contest has a natural interpretation in terms of knots associated with each regressor.  \newline
To analyze the performance of the proposed models we consider simulation studies in which we control for the weight of the outliers, the number of the parameters, the shape of the regressors and the presence of heteroschedasticity. Furthermore we analyze three popular real dataset: the corrected version (see Li et al. \citeyear{li_etal.2010}) of the Boston housing data first analyzed by Harrison and Rubinfeld \citeyearpar{harrison_rubinfeld.1978}; the Munich rental dataset with geoadditive spatial effect considered in Rue and Held \citeyearpar{rue_held.2005} and \cite{yue_rue.2011} among the others; the Barro growth data firstly studied by Barro and Sala i-Martin \citeyearpar{barro_martin.1995} and then extended in the quantile regression framework by \cite{koenker_machado.1999}. Compared with the existing literature, the models we propose introduce robustness, variable selection and non linearity in the estimation process, providing a more flexible framework and new interesting interpretation of some regression coefficient and, on average, lower posterior standard deviations. \newline
\indent The remainder of the paper is organized as follows. In Section \ref{sec:sep_distribution}, we introduce the SEP distribution and discuss its properties relevant to model conditional quantiles as function of exogenous covariates. In Section \ref{sec:robust_bqr} we introduce the model specification and the MCMC algorithms proposed. 
In Section \ref{sec:nonlinear_extension} we approach the non--linear extension of the linear quantile approach via GAM models. Section \ref{sec:simulation_study} explores the sampling performances of the proposed models through some simulation experiments. Section \ref{sec:empirical_application} discusses three well known empirical applications while Section \ref{sec:conclusion} concludes.
%
%::::::::::::::::::::::::::::::::::::::::::::::::::::
% SECTION: THE SKEWED EP DISTRIBUTION
%::::::::::::::::::::::::::::::::::::::::::::::::::::
\section{The Skewed Exponential Power distribution}
\label{sec:sep_distribution}
%::::::::::::::::::::::::::::::::::::::::::::::::::::
%
\noindent Zhu and Zinde--Walsh \citeyearpar{zhu_zinde.2009} have recently proposed a parametrization of the SEP distribution introduced by \cite{fernandez_steel.1998} particularly convenient when quantiles are the main concern. 
%%% SEP
\begin{definition}
%%%
A random variable $Y\in\mathbb{R}$ is said to be Skewed Exponential Power distributed,  i.e. $Y\sim\mathcal{SEP}\left(\mu,\sigma,\alpha, \tau\right)$, if its density has the following form:
\begin{equation}
f_{\sS\sE\sP}\left(y;\mu,\sigma,\alpha,\tau\right)=\left\{
\begin{array}{lc}
\frac{1}{\sigma}\kappa_{\sE\sP}\left(\alpha\right)\exp\left\{-\frac{1}{\alpha}\left(\frac{\mu-y}{2\tau\sigma}\right)^\alpha\right\}, & {\text if } \quad y\leq\mu \\
\frac{1}{\sigma}\kappa_{\sE\sP}\left(\alpha\right)\exp\left\{-\frac{1}{\alpha}\left(\frac{y-\mu}{2\left(1-\tau\right)\sigma}\right)^\alpha\right\}, & {\text if}\quad y>\mu,
\end{array}
\right.
\label{eq:sep_pdf}
\end{equation}
where $\mu\in \Re$ is the location parameter, $\sigma\in \Re^{+}$ and $\alpha\in\left(0, \infty \right)$ are the scale and shape parameters respectively, while $\kappa_{\sE\sP}=\left[2\alpha^{\frac{1}{\alpha}}\Gamma\left(1+\frac{1}{\alpha}\right)\right]^{-1}$ with $\Gamma\left(\cdot\right)$ being the complete gamma function. Moreover, the parameter $\tau\in\left(0,1\right)$ controls for the skewness of the distribution.
\end{definition}
\noindent One of the most nice property of  (\ref{eq:sep_pdf}), which induces us to propose it for quantile regression inference, is that the location parameter $\mu$ coincides with the $\tau$--level quantile (we will theoretically prove it in Appendix A). It can be also shown (see Zhu and Zinde--Walsh, \citeyear{zhu_zinde.2009}) that the kurtosis of the SEP is directly determined by its parameter $\alpha$. 
In Figure \ref{fig:SEPD_plot} we present the pdf of the SEP distribution for different values of shape ($\alpha$) and skewness ($\tau$) parameters, with fixed values of the location and scale parameters $\left(\mu,\sigma\right)=\left(0,1\right)$. 
%
%:::::::::::::::::::::::::::::::::::::::::::::::::::::::::::::::
% FIGURE: SEP PARAMETERISATION
%:::::::::::::::::::::::::::::::::::::::::::::::::::::::::::::::
\begin{figure}
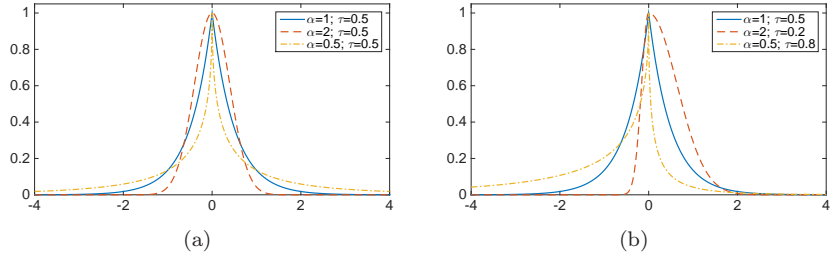

\centering%
\subfigure[\label{fig:diff_alpha}]%
{\includegraphics[scale = 0.3]{Figures/SEP_different_Alpha.eps}}\qquad
\subfigure[\label{fig:diff_alpha_tau}]
{\includegraphics[scale = 0.3]{Figures/SEP_different_Alpha_Tau.eps}}
\caption{\footnotesize{Plot of the SEP distribution for different level of the shape parameter (\ref{fig:diff_alpha}) and skewness parameter (\ref{fig:diff_alpha_tau}) with $\sigma = 1.0$ and $\mu = 0.0$.
\label{fig:SEPD_plot}}}
\end{figure}
%:::::::::::::::::::::::::::::::::::::::::::::::::::::::::::::::
%
It is worth noting that, for a fixed value of $\tau=0.5$ (see subfigure \ref{fig:diff_alpha}), we retrive the Laplace and the Normal distribution when the shape parameter is equal to $\alpha=1$ and $\alpha=2$, respectively. Moreover the smaller is $\alpha$, the fatter are the tails of the distribution and in particular as $\alpha\rightarrow0$ the SEP becomes the Chauchy distribution while as $\alpha\rightarrow\infty$ it becomes equal to the Uniform one. It is hence evident the importance of the parameter $\alpha$ in capturing the behaviour of the tails which may be fundamental when outliers or heavy tails data are modelled.
Furthermore, subfigure \ref{fig:diff_alpha_tau} displays the behavior of the SEP for different combination of $\alpha$ and $\tau$. In this case, the ALD ($\alpha =1$) and the Skew Normal distribution ($\alpha = 2$) can be obtained because of the role of the skewness parameter $\tau$. In the same figure, it should be also evident the relation between $\tau$ and the location parameter $\mu$. For a fixed $\mu$ ($\mu = 0$ in the graph) by varying $\tau$, the shape of the distribution changes in such a way that $\mu$ becomes its quantile of level $\tau$.%
%:::::::::::::::::::::::::::::::::::::::::::::::::::::::::::::::
% SECTION: ROBUST QUANTILE REGRESSION
%:::::::::::::::::::::::::::::::::::::::::::::::::::::::::::::::
\section{Robust Bayesian linear quantile regression}
\label{sec:robust_bqr}
%:::::::::::::::::::::::::::::::::::::::::::::::::::::::::::::::
%
In this section we propose the use of the SEP distribution to implement the Bayesian inference for linear quantile regression combined with the prior distributions specification. Since we are interested in Lasso penalization problem in order to achieve sparsity within the quantile regression model, we propose as prior distribution for the regression parameters, a generalized version of the univariate independent Laplace distribution proposed by Park and Casella \citeyearpar{park_casella.2008} and Hans \citeyearpar{hans.2009}. In line with \cite{alhamzawi_etal.2012}, for each quantile regression parameter we assume a Laplace distribution having different scale parameter in order to shrink each regression parameter in a different way. To achieve the Bayesian procedure we provide an adaptive MCMC sampling scheme obtained by running a block-move Independent Metropolis within Gibbs. 
%
%:::::::::::::::::::::::::::::::::::::::::::::::::::::::::::::::
% SECTION: ROBUST QUANTILE REGRESSION, 
% MODEL SPECIFICATION
%:::::::::::::::::::::::::::::::::::::::::::::::::::::::::::::::
\subsection{Model specification}
\label{sec:robust_bqr_mod_spec}
%:::::::::::::::::::::::::::::::::::::::::::::::::::::::::::::::
%
\noindent Let $\bY=\left(Y_1,Y_2,\dots,Y_T\right)$ a random sample of $T$ observations and $\mathbf{X}_t = \left(1, X_{t,1},\right.$ $\left. \dots, X_{t,p-1}\right)^{\prime}$, $t=1,2,\dots,T$ the associated set of $p$ covariates . Consider the following linear quantile regression model
\begin{eqnarray}
Y_{t}=\mathbf{X}_{t}^{\prime}\bbeta_\tau + \varepsilon_{t},\qquad t=1,2,\dots,T,
\label{eq:static_robust_linear_bqr_model}
\end{eqnarray}
where $\boldsymbol{\beta}_{\tau}=\left(\beta_{\tau, 0}, \beta_{\tau, 1}, \dots,\beta_{\tau, p-1}\right)^{\prime}$ is the vector of $p$ unknown regression parameters varying with the quantile $\tau$ level. Here, $\varepsilon_{t}$, for any $t=1,2,\dots,T$, are independent random variables which are supposed to have zero $\tau$--$th$ quantile and constant variance. 
Assuming $\by=\left(y_1,y_2,\dots,y_T\right)$ a realization of $\bY$ and $\bx_t$ a realization of $\bX_t$, then the likelihood function for the model \eqref{eq:static_robust_linear_bqr_model} based on the SEP distribution \eqref{eq:sep_pdf} with fixed $\tau$ can be written as
\begin{align}
\mathcal{L_{\tau}}(\boldsymbol{\beta}_\tau,\sigma,\alpha, & \mid\by,\bx_t) = \prod_{t=1}^T\frac{1}{2\sigma}\frac{\alpha^{-\frac{1}{\alpha}}}{\Gamma\left(1+\frac{1}{\alpha}\right)} \left[\exp\left\{-\frac{1}{\alpha}\left(\frac{\bx_t^\prime\boldsymbol{\beta}_{\tau}-y_t}{2\tau\sigma}\right)^{\alpha}\right\}\bbone_{\left(y_t\leq\bx_t^\prime\boldsymbol{\beta}_{\tau}\right)}\right.\nonumber\\
%%%
&\qquad\quad\left.+\exp\left\{-\frac{1}{\alpha}\left(\frac{y_t-\bx_t^\prime\boldsymbol{\beta}_{\tau}}{2\left(1-\tau\right)\sigma}\right)^{\alpha} \right\}\bbone_{\left(y_t>\bx_t^\prime\boldsymbol{\beta}_\tau\right)}\right]\nonumber\\
%%%
&=\frac{1}{\left(2\sigma\right)^T}\frac{\alpha^{-\frac{T}{\alpha}}}{\Gamma\left(1+\frac{1}{\alpha}\right)^T} \left[\exp\left\{-\frac{1}{\alpha}\sum_{t=1}^T \left(\frac{\bx_t^\prime\boldsymbol{\beta}_{\tau}-y_t}{2\tau\sigma}\right)^{\alpha}\right\}\bbone_{\left(y_t\leq\bx_t^\prime\boldsymbol{\beta}_\tau\right)}\right.\nonumber\\
%%%
&\qquad\quad\left.+\exp\left\{-\frac{1}{\alpha}\sum_{t=1}^T\left(\frac{y_t-\bx_t^\prime\boldsymbol{\beta}_{\tau}}{2\left(1-\tau\right)\sigma}\right)^{\alpha} \right\}\bbone_{\left(y_t>\bx_t^\prime\boldsymbol{\beta}_\tau\right)}\right],
%%%
\label{eq:sep_likelihood}
%%%
\end{align}
where in this case the parameter $\mu$ of equation \eqref{eq:sep_pdf} has been replaced by the regression function  $\mu=\bx_{t}^{\prime}\bbeta_{\tau}$. As discussed in the previous section, due to the property of the SEP distribution, the regression function $\bx_t^{\prime}\bbeta_\tau$ corresponds to the conditional $\tau$--level quantile of $Y_t$ i.e. 
$Q_{\tau}\left(Y_t\mid\bX_t=\bx_t\right)=\bx_t^\prime\boldsymbol{\beta}_\tau$. In what follows, we omit the subscript $\tau$ for sake of simplicity.\newline
\indent The Bayesian inferential procedure requires the specification of the prior distribution for the unknown vector of parameters $\boldsymbol{\Xi}=\left(\boldsymbol{\beta},\bgamma, \sigma,\alpha\right)$. As mentioned before, for the parameters of the regression function we generalize the prior proposed in Park and Casella assuming the hierarchical structure in (\ref{eq:bqr_linear_prior_reg_par_1}) and (\ref{eq:bqr_linear_prior_reg_par_2}) which allows to efficiently shrink each parameter. The priors for the parameters are:
\begin{equation} 
\label{eq:static_robust_linear_bqr_priors}
\pi\left(\boldsymbol{\Xi}\right)=\pi\left(\boldsymbol{\beta}\mid\bgamma\right)\pi\left(\bgamma\right)\pi\left(\sigma\right)\pi\left(\alpha\right),
\end{equation}
with 
%%%
\begin{align}
\pi\left(\bbeta\mid\bgamma\right)&\propto\prod_{j=1}^p L_1 \left(\beta_j\mid 0,\gamma_j\right)
\label{eq:bqr_linear_prior_reg_par_1}\\
%%%
\pi\left(\bgamma\right)&\propto\prod_{j=1}^p\mathcal{G}\left(\gamma_j\mid\psi,\varpi\right)
\label{eq:bqr_linear_prior_reg_par_2}\\
%%%
\pi\left(\sigma\right)&\propto\mathcal{IG}\left(a, b\right)
\label{eq:bqr_linear_prior_reg_par_3}\\
%%%
\pi\left(\alpha\right)&\propto \mathcal{B}\left(c,d\right)\bbone_{\left(0,2\right)}\left(\alpha\right),
\label{eq:bqr_linear_prior_reg_par_4}
\end{align}
%%%
where $\bbeta\in\real^{p}$, $\left(\psi,\varpi,a,b,c,d\right)$ are given positive hyperparameters and $\bgamma=\left(\gamma_1,\gamma_2,\dots,\gamma_p\right)$ are the parameters of the univariate Laplace distribution:
%%%
\begin{equation}
L_1 \left(\beta_j\mid 0,\gamma_j\right)=\frac{\gamma_j}{2}\exp\left\{-\gamma_j\vert\beta_j\vert\right\}\bbone_{\left(-\infty,+\infty\right)}\left(\beta_j\right).
\end{equation}
%%%
with zero location and $\gamma_j$ scale parameter.  In (\ref{eq:bqr_linear_prior_reg_par_2})-(\ref{eq:bqr_linear_prior_reg_par_4})  $\mathcal{G}$, $\mathcal{IG}$ and $\mathcal{B}$ denote the Gamma, Inverse Gamma and Beta distributions, respectively.
As known due to its characteristics, the Laplace distribution is the Bayesian counterpart of the Lasso penalization methodology introduced by \cite{tibshirani.1996} to achieve sparsity within the classical regression framework. The original Bayesian Lasso, see also, e.g., \cite{park_casella.2008} and \cite{hans.2009}, introduces the same univariate independent Laplace prior distribution for each regression parameters. Here, as in \cite{alhamzawi_etal.2012}, we consider a more general case using the parameters $\gamma_j$, $j=1,2,\dots,p$ allowing us to overcome the problem that may arise in presence of regressors with different scales of measurement by shrinking each regression parameter in a different way. \newline
As shown in \cite{park_casella.2008} and \cite{kozumi_kobayashi.2011}, the Laplace distribution can be expressed as a location--scale mixture of Gaussians which adapted to our case becomes
%%%
\begin{equation}
L_1 \left(\beta_j\mid 0,\gamma_j\right)=\int_{0}^\infty\frac{1}{\sqrt{2\pi\omega_j}}\exp\left\{-\frac{\beta_j^2}{2\omega_j}\right\}\frac{\gamma_j^2}{2}\exp\left\{-\frac{\gamma_j^2\omega_j}{2}\right\}\,d\omega_j,
\end{equation}
%%%
for $j=1,2,\dots,p$, where the mixing variable is exponentially distributed with shape parameter $2/\gamma_j^2$.  
Furthermore, to retain a parsimonious model parameterization, we introduce a second layer hierarchical prior representation for the vector of shape parameters $\bgamma$, in equation \eqref{eq:bqr_linear_prior_reg_par_2}. Using the location--scale representation of the Laplace distribution, the prior structure defined in equations \eqref{eq:bqr_linear_prior_reg_par_1}--\eqref{eq:bqr_linear_prior_reg_par_2}, can be represented as follows
%%%
\begin{align}
\bbeta\mid\bomega&\sim\mathcal{N}_p\left(\bbeta\mid\bO_p,\bOmega\right)\\
\omega_j\mid\gamma_j&\sim\mathcal{E}\left(\omega_j\mid2/\gamma_j^2\right)\\
\gamma_j&\sim\mathcal{G}\left(\gamma_j\mid\psi,\varpi\right),
\end{align}
%%%
where $\bO_p$ is a column vector of zeros of dimension $p$, $\bomega=\left(\omega_1,\omega_2,\dots,\omega_p\right)^\prime$, $\bOmega=\diag\left\{\omega_j,j=1,2,\dots,p\right\}$ and $\mathcal{E}$ is the exponential distribution.
Concerning the specification of the values for the hyperparameters of the prior distributions,
typically, vague priors are imposed on the scale $\sigma$ because it is regarded as a nuisance parameter, see e.g. Yu and Moyeed \citeyearpar{yu_moyeed.2001} and Tokdar and Kadane \citeyearpar{tokdar_kadane.2012}. Concerning the prior specification for the shape parameter $\alpha$, we impose a Beta distribution with $c=2$ and $d=2$ in order to allow for a large prior variance without incurring in the problem of U--shaped Beta distribution which gives large probability mass to extreme values. Moreover, we extend the Beta distribution to cover the support $\alpha\in\left(0,2\right)$ where the special case $\alpha=2$ allows to consider the so called conditional \qmo expectile\qmcsp of Newey and Powell \citeyearpar{newey_powell.1987}, while the case $\alpha=1$ the conditional quantiles based on the ALD introduced by Yu and Moyeed \citeyearpar{yu_moyeed.2001}.
As mentioned in Section \ref{sec:sep_distribution}, the parameter $\alpha$ regulates the tails--fatness of the SEP distribution so that smaller values implies larger probabilities of extreme observations. Therefore, choosing $\alpha\in (0,2)$ we encompass both quantile and expectile regression issue addressing at the same time the robustness task relying on a distribution with fatter tails than the Skew Normal. \newline
%%%
\indent In the following Section, we introduce the Bayesian parameter estimation procedure which aims to simulate from the posterior distribution using an Adaptive Independent Metropolis--Hastings MCMC algorithm.
%
%:::::::::::::::::::::::::::::::::::::::::::::::::::::::::::::::::::::
% SECTION: ADAPTIVE METROPOLIS HASTINGS
%:::::::::::::::::::::::::::::::::::::::::::::::::::::::::::::::::::::
\subsection{Adaptive IMG for linear quantile regression}
\label{sec:Adaptive_MH_linear}
%::::::::::::::::::::::::::::::::::::::::::::::::::::::::::::::::::::::
%
\noindent The Bayesian inference is carried out using an adaptive MCMC sampling scheme based on the following posterior distribution
%%%
\begin{equation}
\pi\left(\bXi\mid\by,\bx\right)\propto\mathcal{L}_{\tau}\left(\boldsymbol{\beta}, \sigma, \alpha \mid \by,\bx\right) \pi\left(\boldsymbol{\beta}\mid\bgamma\right)\pi\left(\bgamma\right)\pi\left(\sigma\right)\pi\left(\alpha\right),
\end{equation}
%%%
where $\mathcal{L}_{\tau}\left(\boldsymbol{\beta}, \sigma, \alpha \mid \by,\bx\right)$ indicates the likelihood function specified in equation \eqref{eq:sep_likelihood}. After choosing a set of initial values for the parameter vector $\bXi^{(0)}$, simulations from the posterior distribution at the $i$--th iteration of $\bXi^{(i)}$, for $i=1,2,\dots$, are obtained by running iteratively a block--move Independent Metropolis within Gibbs (IMG). The simulation algorithm requires as first step the specification of a proposal distribution for the parameters $\left(\boldsymbol{\beta}, \sigma, \alpha\right)$. 
 
To propose a move for each block of the parameters, we choose the following proposal distributions:
\begin{align}
q\left(\boldsymbol{\beta}\right)&\sim\mathcal{N}_p\left(\boldsymbol{\beta}\mid\bmu^{\left(i\right)}_{\beta}, \bSigma^{\left(i\right)}_{\beta} \right)
%%%
\label{eq:IMH_propo_beta}\\
%%%
q\left(\sigma\right)&\sim\mathcal{N}_{1}\left(\tilde{\sigma}\mid\mu^{\left(i\right)}_{\tilde{\sigma}}, \psi^{\left(i\right)}_{\tilde{\sigma}} \right)\Big{\vert}\frac{\partial\tilde{\sigma}}{\partial\sigma}\Big{\vert}
%%%
\label{eq:IMH_propo_sigma}\\
%%%
q\left(\alpha\right)&\sim\mathcal{N}_1\left(\mu^{\left(i\right)}_{\alpha}, \psi^{\left(i\right)}_{\alpha} \right) \bbone_{\left(0, 2 \right)}\left(\alpha \right),
%%%
\label{eq:IMH_propo_alpha}
%%%
\end{align}
where the scale parameter $\tilde{\sigma}=\log\left(\sigma\right)$ is considered on a log--scale and subsequently transformed to preserve positiveness. The jacobian term in equation \eqref{eq:IMH_propo_sigma} is required to get the distribution of the transformation $\sigma=\exp\left(\tilde{\sigma}\right)$. At each iteration $i=1,2,\dots$, the IMG algorithm proceeds by simulating a candidate draw from each parameter block, i.e. $\Upsilon^{*}=\left(\xi_1^{*},\xi_2^{*},\xi_3^{*}\right)=\left(\boldsymbol{\beta}^{*},\sigma^{*},\alpha^{*}\right)$ which is subsequently accepted or rejected. The generic probability that the proposed candidate parameter $\xi_j^{*}$, for $j=1,2,3$ becomes the new state of the chain is evaluated on the basis of the following acceptance probability
\begin{equation}
\lambda\left(\xi_j^{\left(i-1\right)},\xi^*_j\right)= \min\left\{1, 
\frac{\mathcal{L}\left(\xi^*_j,\bXi_{-j}^{\left(i-1\right)}\mid\by,\bx\right)}{\mathcal{L}\left(\bXi^{\left(i-1\right)}\mid\by,\bx\right)}\frac{\pi\left(\xi^*_j\right)}{\pi\left(\xi_j^{\left(i-1\right)}\right)}\frac{q\left(\xi_j^{\left(i-1\right)}\right)}{q\left(\xi^*_j\right)}\right\}, \nonumber
\end{equation}
for $j=1,2,3$, where $\lambda\left(\xi_j^{\left(i-1\right)},\xi^*_j\right)$ indicates the probability to move from the old to the proposed state of the chain, $\pi\left(\cdot\right)$ is the generic prior given in equations \eqref{eq:bqr_linear_prior_reg_par_1} - \eqref{eq:bqr_linear_prior_reg_par_4} and $\bXi_{-j}^{\left(i-1\right)}$ refers to the whole set of parameters at iteration $i-1$ without the $j$--th element of $\Upsilon^{*}$. To complete the algorithm we sample $\left(\omega_j,\gamma_j\right)$, for $j=1,2,\dots,p$ with a Gibbs step by simulating directly from the respective full conditional distributions
%%%
\begin{align}
\omega_j\mid\beta_j^{\left(i\right)},\gamma_j^{\left(i-1\right)}&\sim\mathcal{GIG}\left(\omega_j\Big\vert\frac{1}{2},{\beta_j^{\left(i\right)}}^2,{\gamma_j^{\left(i-1\right)}}^2\right)\nonumber\\
%%%%
{\gamma_j^2}^{\left(i\right)}\mid\omega_j^{\left(i\right)}&\sim\mathcal{G}\left(\gamma^2_j\Big\vert\psi+1,\varpi+\frac{\omega_j^{\left(i\right)}}{2}\right).\nonumber
\end{align}
%%%
where $\mathcal{GIG}$ denotes the Generalized Inverse Gaussian distribution. Since most of the statistical properties of the Markov chain as well as the performance of the Monte Carlo estimators crucially depend on the definition of the proposal distribution $q\left(\cdot \right)$ (see Andrieu and Moulines, \citeyear{andrieu_etal.2006} and Andrieu and Thoms, \citeyear{andrieu_thoms.2008}) we improve the basic IMG--MCMC algorithm with an additional step adapting the proposal parameters using the following equations:
\begin{align}
\bmu_\beta^{(i+1)}&=\bmu_\beta^{(i)}+\varsigma^{(i+1)}\left(\boldsymbol{\beta}-\bmu_\beta^{(i)}\right),\\
%==============================================================
\bSigma^{(i+1)}_\beta&=\bSigma^{(i)}_\beta+\varsigma^{(i+1)}\left(\boldsymbol{\beta}-\bmu_\beta^{(i)}\right)\left(\boldsymbol{\beta}-\bmu_\beta^{(i)}\right)^\prime,\\
\mu_{\tilde{\sigma}}^{(i+1)}&=\mu_{\tilde{\sigma}}^{(i)}+\varsigma^{(i+1)}\left(\tilde{\sigma}-\mu_{\tilde{\sigma}}^{(i)}\right),\\
%==============================================================
\psi^{(i+1)}_{\tilde{\sigma}}&=\psi^{(i)}_{\tilde{\sigma}}+\varsigma^{(i+1)}\left(\tilde{\sigma}-\mu_{\tilde{\sigma}}^{(i)}\right)^2,\\
\mu_\alpha^{(i+1)}&=\mu_\alpha^{(i)}+\varsigma^{(i+1)}\left(\alpha-\mu_\alpha^{(i)}\right),\\
%==============================================================
\psi^{(i+1)}_\alpha&=\psi^{(i)}_\alpha+\varsigma^{(i+1)}\left(\alpha-\mu_\alpha^{(i)}\right)^2,
\end{align}
where $\varsigma^{(i+1)}$ denotes a tuning parameter that should be carefully selected at each iteration to ensure the convergence and the ergodicity of the resulting chain (see Andrieu and Moulines, \citeyear{andrieu_etal.2006}).
\noindent Roberts and Rosenthal \citeyearpar{roberts_rosenthal.2007} provide two conditions for the convergence of the chain: the diminishing adaptation condition, which is satisfied if and only if $\varsigma^{(i)}\longrightarrow 0$, as $i\rightarrow+\infty$, and the bounded convergence condition, which essentially guarantees that all transition kernels considered have bounded convergence time. Andrieu and Moulines \citeyearpar{andrieu_etal.2006} show that both conditions are satisfied if and only if $\varsigma^{(i)}\propto i^{-d}$ where $d\in\left[0.5,1\right]$. For those reasons we choose $\varsigma^{\left(i\right)}=\frac{1}{Ci^{0.5}}$ where $C$ is set to 10, i.e. $C=10$. As argued by Roberts and Rosenthal \citeyearpar{roberts_rosenthal.2007}, together these two conditions ensure asymptotic convergence and a weak law of large numbers for this algorithm.
%
%:::::::::::::::::::::::::::::::::::::::::::::::::::::::::::
% SECTION: NONLINEAR EXTENSION
%:::::::::::::::::::::::::::::::::::::::::::::::::::::::::::
\section{Nonlinear extension}
In this section, we propose an additive extension of the robust linear quantile regression model considered previously to the class of Generalized Additive Models (GAM) introduced by Hastie and Tibshirani \citeyearpar{hastie_tibshirani.1986}. We will set up GAM models using the SEP likelihood considered before. In order to define the quantile function we make use of the P-Spline functions ending up with a semiparametric problem. The Bayesian analysis is carried out by generalizing the \cite{lang_brezger.2004} second order random walk prior for the Spline coefficients assuming a multivariate Laplace distribution accounting for a correlation structure among parameters able to take into account for selection variable issue.
\label{sec:nonlinear_extension}
%:::::::::::::::::::::::::::::::::::::::::::::::::::::::::::
%
%:::::::::::::::::::::::::::::::::::::::::::::::::::::::::::::::
% SECTION: ROBUST QUANTILE REGRESSION, 
% NON-LINEAR MODEL SPECIFICATION
%:::::::::::::::::::::::::::::::::::::::::::::::::::::::::::::::
\subsection{Non--linear model specification}
\label{sec:robust_bqr_mod_spec_nonlin}
%:::::::::::::::::::::::::::::::::::::::::::::::::::::::::::::::
%
%
\noindent  Generalized Additive models extend multiple linear regression by allowing for the response variable to be modeled as sum of unknown smooth functions of continuous covariates. The aim of this section is to set up a robust non linear and semi--parametric framework for quantile regression following a GAM approach using the SEP likelihood. In particular we assume that the $\tau$--level conditional quantile can be modeled as a parametric component jointly with a sum of smooth functions as follows:
\begin{equation}
Q_{\tau} \left(Y_t \mid \bX_t=\bx_t,\bZ_t=\bz_t\right)=\bx_{t}^{\prime}\bbeta+\sum_{j=1}^J f_j \left(z_{j,t} \right),
\label{eq:gam_representation}
\end{equation}
where $\bx_{t}^{\prime}\bbeta$ is the parametric component while each $f_j\left( z_{j,t}\right)$ is a nonparametric continuous smooth function and $\bz_{t} = \left(z_{1,t},z_{2,t},\dots,z_{J,t}\right)^{\prime}$ is an additional set of covariates. To implement the statistical analysis we assume that the nonparametric component $f_j \left( z_{tj} \right)$ can be approximated using a polynomial spline of order $d$, with $k + 1$ equally spaced knots between $\min\left(\bz_t\right)$ and $\max\left(\bz_t\right)$. Using the well known representation of splines in terms of linear combinations of B--splines, we can rewrite equation \eqref{eq:gam_representation} as:
\begin{equation}
Q_{\tau}\left(Y_t\mid\bX_t=\bx_t,\bZ_t=\bz_t\right)=\bx_{t}^{\prime}\bbeta+\sum_{j=1}^J \sum_{\nu=1}^{k+d} \theta_{j,\nu} B_{j,\nu} \left( z_{tj} \right),
\end{equation}
where $B_{j,\nu} \left( z_{tj} \right)$ denote B--spline basis functions and $\theta_{j,\nu}$ are the unknown coefficients. In this framework, the value of the estimated coefficients and the shape of the fitted functions strongly depend upon the number and the position of the knots. With respect to the position, in absence of any prior information we consider equidistant knots as a natural choice. Regarding the number of knots, to catch properly the smoothness of the data a careful trade off needs to be considered between few and too many knots since it may cause underfitting or overfitting respectively. A possible solution to this problem is known as Penalized Spline (P--Spline) proposed by O'Sullivan (\citeyear{osullivan.1986} and \citeyear{osullivan.1988}) and generalized by Eilers and Marx \citeyearpar{eilers_marx.1996} which relies on the introduction of a penalty element on the first or second differences of the B--Spline coefficients. This setting has been embedded in the Bayesian framework by \cite{lang_brezger.2004}, \cite{brezger_lang_2006} and \cite{brezger_steiner.2008} using a second order random walk for all the B--Spline coefficients, i.e.:
%%%%
\begin{align}
\theta_{j,\nu} &= 2\theta_{j,\nu-1} - \theta_{j,\nu-2} + u_{j,\nu},\quad\forall j=1,2,\dots,J,\quad\forall\nu=1,2,\dots,k+d,
\label{eq:spline_coef_rw}
\end{align}
%%%%
where the generic stochastic component $u_{j,\nu}\sim\mathcal{N}\left(0,h_{j}\right)$ and $\theta_{j, 1}$ and $\theta_{j, 2}$ are initialized with diffuse priors, i.e., $\pi\left(\theta_{j,\nu}\right)\propto 1$, for $\nu=1,2$. In their work \cite{lang_brezger.2004} assume that the stochastic components $u_{j,\nu}$ driving the random walk process are independent, i.e. $u_{j,\nu}\perp u_{k,\nu}$, for all $j,k=1,2,\dots,J$ with $j\neq k$. Since there are no reasons to assume a priori $u_{j,\nu}$ and $u_{k,\nu}$ independent $\left(\forall j, k \right)$ here we consider an extension of \eqref{eq:spline_coef_rw} and we assume a multivariate Laplace distribution on the vector of regressors accounting for a correlation structure among them. Moreover it can easily proved, that the original marginal shrinkage effect is preserved under the assumed prior structure, because each marginal prior reduces to a univariate Laplace, see, e.g., \cite{kotz.2001}.\newline 
Moreover using the Laplace distribution as prior distribution allows to extend the Bayesian Lasso approach to estimation.\newline
%%%%
\indent Let $\bu_j=\left(u_{j,1},u_{j,2},\dots,u_{j,k+d}\right)$, we assume $\bu_j\sim\mathcal{AL}_{k+d}\left(0,\bI_{k+d}\right)$, where $\mathcal{AL}_{k+d}$ denotes the multivariate Laplace distribution and $\bI_{d+k}$ is the identity matrix of dimension $k+d$. Furthermore, let $\bD_\delta$ to be the difference matrix of dimension $\left(k+d-\delta\right)\times \left(k+d\right)$ and $\delta=2$ is the order of the differential operator, such that $\bD_\delta\boldsymbol{\theta}_j=\bu_j$, then
%%%%
\begin{equation}
\pi\left(\boldsymbol{\theta}_j\mid h_j\right)\sim\mathcal{AL}_{k+d}\left(0,h_j\left(\bD_\delta^\prime\bD_{\delta}\right)\right),\qquad\forall j=1,2,\dots,J,
\label{eq:mald_definition}
\end{equation}
%%%
having density
%%%%
\begin{equation}
\label{eq:prior_theta_mald}
\pi\left(\boldsymbol{\theta}_j\mid h_j\right)=C\vert\bD_\delta^\prime \bD_{\delta}\vert^{\frac{1}{2}}h_j^{d+k}\exp\left\{-h_j\left[\boldsymbol{\theta}_j^\prime\left(\bD_\delta^\prime\bD_{\delta}\right)\boldsymbol{\theta}_j\right]^{\frac{1}{2}}\right\},
\end{equation}
%%%
where $C=\frac{\sqrt{2\pi}}{\Gamma\left(\frac{d+k+1}{2}\right)}$. As shows in \cite{kotz.2001}, the multivariate Laplace distribution can be expressed as a location--scale mixture of Gaussians, where the mixing variable follows a Gamma distribution
%%%
\begin{align}
\pi\left(\boldsymbol{\theta}_j\mid\phi_j\right)&\sim\mathcal{N}_{k+d} \left(0,\phi_j\left(\bD_\delta^\prime \bD_{\delta}\right)^{-1}\right) \label{eq:mult_lap_mix_normal}\\
\pi\left(\phi_{j}\mid h_j\right)&\sim\mathcal{G}\left(\frac{k+d+1}{2},\frac{h_j^{2}}{2} \right) \label{eq:mult_lap_mix_gamma},
\end{align}
for $j=1,2,\dots,J$. \newline
It is easy to show how to retrieve \eqref{eq:prior_theta_mald} from \eqref{eq:mult_lap_mix_normal} - \eqref{eq:mult_lap_mix_gamma} by integrating out the augmented variable $\phi_j$, i.e.
\begin{align}
\pi\left(\boldsymbol{\theta}_j\mid h_j\right)&=\int_{\mathbb{R}}\frac{\vert\bD_\delta^\prime \bD_{\delta}\vert^{\frac{1}{2}}\exp\left\{-\frac{\boldsymbol{\theta}^\prime\left(\bD_\delta^\prime \bD_{\delta}\right)\boldsymbol{\theta}}{2\phi_j}\right\}}{\left(2\pi\phi_j\right)^{\frac{k+d}{2}}}\frac{\left(\frac{h_j^2}{2}\right)^{\frac{d+k+1}{2}}\phi_j^{\frac{d+k-1}{2}}}{\Gamma\left(\frac{d+k+1}{2}\right)}\nonumber\\
%%%
&\qquad\qquad\qquad\qquad\qquad\qquad\qquad\qquad\qquad\times\exp\left\{\frac{-h_j^2\phi_j}{2}\right\}\,d\phi_j\nonumber\\
%%%
&=\frac{\left(\frac{h_j^2}{2}\right)^{\frac{d+k+1}{2}}\vert\bD_\delta^\prime \bD_{\delta}\vert^{\frac{1}{2}}}{\left(2\pi\right)^{\frac{k+d}{2}}\Gamma\left(\frac{d+k+1}{2}\right)}\nonumber\\
%%%
&\qquad\qquad\qquad\times\int_{\mathbb{R}}\phi_j^{-\frac{1}{2}}\exp\left\{-\frac{1}{2}\left[\frac{\boldsymbol{\theta}^\prime\left(\bD_\delta^\prime \bD_{\delta}\right)\boldsymbol{\theta}}{\phi_j}+h_j^2\phi_j\right]\right\}\,d\phi_j,
\label{eq:mald_integral}
\end{align}
%%%
where the integrand in the previous equation \eqref{eq:mald_integral} is proportional to a Generalized Inverse Gaussian distribution $\mathcal{GIG}\left(p,a,b\right)$ with parameters $p=\frac{1}{2}$, $a=h_j^2$ and $b=\boldsymbol{\theta}_j^\prime\left(\bD_\delta^\prime \bD_{\delta}\right)\boldsymbol{\theta}_j$ from which we have
%%%
\begin{align}
\int_{\mathbb{R}}\phi_j^{-\frac{1}{2}}\exp\left\{-\frac{1}{2}\left[\frac{\boldsymbol{\theta}^\prime\left(\bD_\delta^\prime \bD_{\delta}\right)\boldsymbol{\theta}}{\phi_j}+h_j^2\phi_j\right]\right\}\,d\phi_j&=\frac{2K_{\frac{1}{2}}\left(\sqrt{h_j^2\left(\boldsymbol{\theta}_j^\prime\left(\bD_\delta^\prime \bD_{\delta}\right)\boldsymbol{\theta}_j\right)}\right)}
{\left(\frac{h_j^2}{\boldsymbol{\theta}_j^\prime\left(\bD_\delta^\prime \bD_{\delta}\right)\boldsymbol{\theta}_j}\right)^{\frac{1}{4}}},\nonumber
\end{align}
where $K_{\frac{1}{2}}\left(z\right)=\sqrt{\frac{\pi}{2z}}\exp\left\{-z\right\}$. Substituting this latter expression into equation \eqref{eq:mald_integral} we obtain
\begin{align}
\pi\left(\boldsymbol{\theta}_j\mid h_j\right)&=\frac{\left(\frac{h_j^2}{2}\right)^{\frac{d+k+1}{2}}\vert\bD_\delta^\prime \bD_{\delta}\vert^{\frac{1}{2}}}{\left(2\pi\right)^{\frac{k+d}{2}}\Gamma\left(\frac{d+k+1}{2}\right)}\frac{\frac{\sqrt{2\pi}\exp\left\{-h_j\sqrt{\boldsymbol{\theta}_j^\prime\left(\bD_\delta^\prime \bD_{\delta}\right)\boldsymbol{\theta}_j}\right\}}{\left[h_j^2\left(\boldsymbol{\theta}_j^\prime\left(\bD_\delta^\prime \bD_{\delta}\right)\boldsymbol{\theta}_j\right)\right]^{\frac{1}{4}}}}
{\left(\frac{h_j^2}{\boldsymbol{\theta}_j^\prime\left(\bD_\delta^\prime \bD_{\delta}\right)\boldsymbol{\theta}_j}\right)^{\frac{1}{4}}}\nonumber\\
%%%%
&=\frac{\sqrt{\pi}\vert\bD_\delta^\prime \bD_{\delta}\vert^{\frac{1}{2}}h_j^{d+k}}{\Gamma\left(\frac{d+k+1}{2}\right)}\exp\left\{-h_j\sqrt{\boldsymbol{\theta}_j^\prime\left(\bD_\delta^\prime \bD_{\delta}\right)\boldsymbol{\theta}_j}\right\},
\end{align}
which corresponds to the ALD defined in equations \eqref{eq:mald_definition}--\eqref{eq:prior_theta_mald}. The proposed prior distribution for $\boldsymbol{\theta}_j$ corresponds to the group lasso penalty of \cite{yuan_lin.2006}, \cite{meier_etal.2008} and \cite{li_etal.2010} which accounts for the group structure when performing variable selection. It is worth emphasizing that, in our context, the group variables have a natural interpretation because they correspond to knots accounting for the smoothness level of the same regressor over different regions of the support. The overall smoothness of the fitted curve is controlled by the variance of the error term $h_j$ which correspond to the inverse of the penalization parameter used by \cite{eilers_marx.1996} in the frequentist framework. We choose a conjugate Gamma prior for $h_{j}^2$, that is $h_{j}^2\sim \mathcal{G} \left(a^{\left(h\right)},b^{\left(h\right)}\right)$ with $a^{\left(h\right)}=b^{\left(h\right)}=0.001$. Different choice of hyper parameters may be considered but they all bring to very similar results. Summarizing, putting a Gamma prior on $h_j^2$, and assuming the prior structure defined in equations \eqref{eq:bqr_linear_prior_reg_par_1}--\eqref{eq:bqr_linear_prior_reg_par_2} for the shape and scale parameters $\left(\sigma,\alpha\right)$, we then have the following hierarchical model
\begin{align}
y_t&=x_t^\prime\boldsymbol{\beta}+\sum_{j=1}^J\bB_j^z\boldsymbol{\theta}_{j}+\epsilon_t,\qquad\epsilon_t\sim\mathcal{SEP}\left(0,\tau,\sigma,\alpha\right)\\
\bbeta\mid\bomega&\sim\mathcal{N}_p\left(\bbeta\mid\bO_p,\bOmega\right)\\
\omega_k\mid\gamma_k&\sim\mathcal{E}\left(\omega_k\mid2/\gamma_k^2\right)\\
\gamma_k&\sim\mathcal{G}\left(\gamma_k\mid\psi,\varpi\right),\qquad\qquad\qquad\forall k=1,2,\dots,p\\
\boldsymbol{\theta}_j\mid\phi_j&\sim\mathcal{N}_{k+d} \left(0,\phi_j\left(\bD_\delta^\prime \bD_{\delta}\right)^{-1}\right)\nonumber\\
\phi_{j}\mid h_{j}&\sim\mathcal{G}\left(\frac{d+k+1}{2},\frac{h_j^{2}}{2} \right)\\
h_{j}^2&\sim\mathcal{IG} \left(\frac{a^{\left(h\right)}}{2},\frac{b^{\left(h\right)}}{2}\right)\qquad\qquad\forall j=1,2,\dots,J,
\end{align}
where $\bB_{j}^z = \left( B_{j,1} \left( z_{tj} \right), \dots, B_{j,k+d} \left( z_{tj} \right) \right)$.
%
%::::::::::::::::::::::::::::::::::::::::::::::::::::::::::
% SEC: Adaptive MCMC for GAM models
%::::::::::::::::::::::::::::::::::::::::::::::::::::::::::
\subsection{Adaptive IMG for quantiles GAM}
\label{sec:Adaptive_MCMC_GAM}
%::::::::::::::::::::::::::::::::::::::::::::::::::::::::::
%
\noindent In order to perform the Bayesian inference the Adaptive MCMC algorithm proposed in Section \ref{sec:Adaptive_MH_linear} will be slightly modified to deal with the simulation from the posterior distribution of the generalized quantiles GAM parameters. The posterior distribution becomes equal to
%%%
\begin{align}
\pi\left(\bXi\mid \by,\bx,\bz\right) & \propto
\mathcal{L}_{\tau}\left(\boldsymbol{\beta}, \sigma, \alpha, \boldsymbol{\vartheta} \mid \by,\bx,\bz\right)\pi\left(\bbeta \mid \bgamma\right)\pi\left(\bgamma\right) \nonumber \\
&\qquad\qquad\qquad\qquad\qquad\times \pi\left(\boldsymbol{\vartheta}\mid\bphi\right)\pi\left(\bphi\mid\bh\right)\pi\left(\bh\right)\pi\left(\sigma,\alpha\right)
\end{align}
%%%
where the vector $\bXi$ contains now three more additional set of parameters, namely $\boldsymbol{\vartheta}=\left(\boldsymbol{\theta}_1,\boldsymbol{\theta}_2,\dots,\boldsymbol{\theta}_J \right)$, $\bphi=\left(\phi_1,\phi_2,\dots,\phi_J\right)$ and $\bh=\left(h_1^2,h_2^2,\dots,h_J^2\right)$. The likelihood function $\mathcal{L}_{\tau}\left(\boldsymbol{\beta}, \sigma, \alpha, \boldsymbol{\vartheta} \mid \by,\bx,\bz\right)$ defined in equation \eqref{eq:sep_likelihood} for the linear model should be adapted to account for the additional splines coefficients. Here to perform the Bayesian analysis it is necessary to add three more steps to the algorithm described in Section \ref{sec:Adaptive_MH_linear}. Specifically, after having updated all the parameters of the linear part of the model, a candidate for $\boldsymbol{\theta}_j$, for $j=1,2,\dots,J$ is drawn from a Gaussian proposal distribution, i.e., $q\left(\btheta_{j,i-1},\theta^*_{j}\right) \sim \mathcal{N}_{k+d}\left(\bmu^{\left(i\right)}_{\theta_j},\bSigma^{\left(i\right)}_{\theta_j}\right)$ and accepted on the basis of the following acceptance probability
%%%%
\begin{align}
\lambda\left(\boldsymbol{\theta}_{j}^{\left(i-1\right)},\boldsymbol{\theta}^*_{j}\right)&=\min\left\{1, 
\frac{\mathcal{L}\left(\boldsymbol{\beta}^{\left(i-1\right)},\sigma^{\left(i-1\right)},\alpha^{\left(i-1\right)},\boldsymbol{\theta}_{j}^*,\boldsymbol{\vartheta}_{-j}^{\left(i-1\right)}\mid\by,\bx,\bz\right)}{\mathcal{L}\left(\boldsymbol{\beta}^{\left(i-1\right)},\sigma^{\left(i-1\right)},\alpha^{\left(i-1\right)},\boldsymbol{\vartheta}_j^{\left(i-1\right)}\mid\by,\bx,\bz\right)}\right.\nonumber\\
%%%
&\qquad\qquad\qquad\qquad\qquad\qquad\qquad\qquad\times\left.\frac{\pi\left(\boldsymbol{\theta}^*_j\right)}{\pi\left(\boldsymbol{\theta}_j^{\left(i-1\right)}\right)}\frac{q\left(\boldsymbol{\theta}_j^{\left(i-1\right)}\right)}{q\left(\boldsymbol{\theta}_j^*\right)}\right\},\nonumber
\end{align}
%%%%
where $\boldsymbol{\vartheta}_{-j}$ denote the whole set of B--Spline coefficients without the $j$--th component. Furthermore, as specified in Section \ref{sec:Adaptive_MH_linear} for the regression parameters, an adaptive step for the mean and the variance of the proposal distribution of each $\theta_j$ is implemented using the following equation 
%%%
\begin{align}
\bmu_{\theta_j}^{(i+1)}&=\bmu_{\theta_j}^{(i)}+\varsigma^{(i+1)}\left({\boldsymbol{\theta}_j}-\bmu_{\boldsymbol{\theta}_j}^{(i)}\right),\\
%==============================================================
\bSigma^{(i+1)}_{\theta_j}&=\bSigma^{(i)}_{\boldsymbol{\theta}_j}+\varsigma^{(i+1)}\left({\boldsymbol{\theta}_j}-\bmu_{\boldsymbol{\theta}_j}^{(i)}\right)\left({\boldsymbol{\theta}_j}-\bmu_{\theta_j}^{(i)}\right)^\prime,
\end{align}
for $j=1,2,\dots,J$, where $\varsigma$ is the vanishing factor fixed as discussed above. The hyperparameters $\left(\boldsymbol{\phi},\bh\right)$ are updated by single--move Gibbs sampling steps by simulating from the following full conditionals which are proportional to the GIG distribution
%%%
\begin{align}
\phi_j\mid\boldsymbol{\theta}_j^{\left(i\right)},h_j^{\left(i-1\right)}&\sim\mathcal{GIG}\left(\phi_j\Big\vert\frac{1}{2},{h_j^{\left(i-1\right)}}^2,\boldsymbol{\theta}_j^\prime\left(\bD_\delta^\prime \bD_{\delta}\right)\boldsymbol{\theta}_j\right)\nonumber\\
%%%%
{h_j^2}\mid\phi_j^{\left(i\right)}&\sim\mathcal{GIG}\left(h^2_j\Big\vert-\frac{a^h}{2},\phi_j^{(i)},\frac{b^h}{2}\right).\nonumber
\end{align}
%%%

%::::::::::::::::::::::::::::::::::::::::::::::::::::::::::
% SECTION: SIMULATION STUDY
%::::::::::::::::::::::::::::::::::::::::::::::::::::::::::
\section{Simulation Studies}
\label{sec:simulation_study}
%::::::::::::::::::::::::::::::::::::::::::::::::::::::::::
%
\noindent In this Section, simulation studies are performed to highlight the improvements in robustness obtained by implementing SEP based quantile regression compared with those obtained by the traditional Bayesian quantile regression based on the ADL distribution. More specifically, our purpose is to illustrate how the SEP misspecified model assumption in the quantile regression framework generate posterior distributions of the regression parameters centered on the true values. The first simulation experiment assesses the robustness properties of the proposed methodology for quantile estimation when the joint distribution of the couple $\left(Y_i,\bX_i\right)$, for $i=1,2,\dots,T$, is contaminated by the presence of outliers.  The second study shows the effectiveness of shrinkage effect obtained by imposing the Lasso--type prior proposed when multiple quantile linear model is the main concern. The last experiment aims at highlighting the ability of the model to adapt to non--linear shapes when data come from heterogeneous fat--tailed distributions. The hyperparameters of the priors distribution are all chosen such that the priors are non informative. In particular regarding the nuisance parameter $\sigma$ we choose $ a = b = 0.001$ which corresponds to a proper Inverse Gamma distributions with infinite second moments. When lasso prior is assumed the hyperparameters $\left(\psi,\varpi\right)$ in the Gamma priors for $\gamma_j$ are chosen to be $0.1$. 
%
%::::::::::::::::::::::::::::::::::::::::::::::::::::::::::
\subsection{Simple linear quantile regression}
\label{sec:linear_model}
%::::::::::::::::::::::::::::::::::::::::::::::::::::::::::
%
\noindent For this experiment we consider a sample of $T=100$  drawn from the following homoskedastic mixture of  distributions 
%%%%
\begin{align}
f \left(\left(Y,X \right)^\prime,\boldsymbol{\eta},\bmu_1,\bmu_2,\dots,\bmu_L,\bSigma\right)=\sum_{l=1}^{L}\eta_l \varphi\left(\bmu_l,\bSigma\right),
\end{align}
%%%%
where $\varphi$ denote the density function of a Gaussian distribution with mean $\bmu$ and variance and covariance matrix $\bSigma$ and $\boldsymbol{\eta}=(\eta_1,\cdots, \eta_L)$ is the vector of weights. We fix the number of components equal to $L=3$, with mixture weights $\boldsymbol{\eta}=\left(0.85, 0.0725, 0.0725 \right)$, locations and scale matrix specified as
\begin{align}
\begin{array}{cccc}
\bmu_1=\left[\begin{matrix} 1\\0 \end{matrix}\right],&\bmu_2=\left[\begin{matrix} 4\\0 \end{matrix}\right],&\bmu_3=\left[\begin{matrix} -2\\0 \end{matrix}\right],&\bSigma=\left[\begin{matrix} 1&0.6\\0.6 & 1\end{matrix}\right].
\end{array}
\end{align}
The quantile regression model implemented is a simple model with only one exogenous variable i.e. $Y_t=X_t\beta +\varepsilon_t$ for $t=1,2,\dots T$. The aim of this toy example is to show the performances of the Bayesian quantile linear regression analysis assuming both the ALD and the SEP likelihood when the data are strongly contaminated by the presence of outliers.
Since we have only one regressor, for this illustrative example we use a simplified version of the sampler proposed in Section \ref{sec:Adaptive_MH_linear}, in which a simple Gaussian prior is considered for $\beta$.
For $\tau = \left(0.1, 0.5, 0.9\right)$ we run the MCMC algorithm with $N=50,000$ iterations and a burn--in of $M=10,000$. For both the ALD and the SEP distribution assumption, initial values for the parameters to be estimated, namely $\left(\bbeta, \sigma\right)$, are randomly drawn from $\mathcal{N}\left(0,1\right)$ and $\mathcal{IG}\left(0.001,0.001\right)$, respectively. The additional initial value for the parameter $\alpha$, required only for the SEP distribution case, is randomly drawn form $\mathcal{B}\left(2, 2\right)$.

%
%%:::::::::::::::::::::::::::::::::::::::::::::::::::::::::::::::
%% FIGURE: SIMULATED EXAMPLE (MIXTURE GAUSSIAN)
%%:::::::::::::::::::::::::::::::::::::::::::::::::::::::::::::::
\begin{figure}[!t]
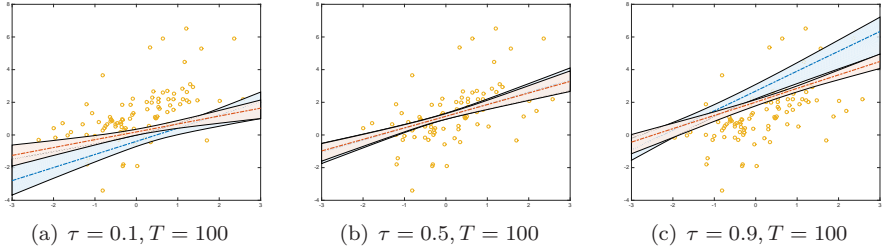

\centering%
\subfigure[$\tau=0.1,T=100$ \label{fig:tau_01_T_100_gauss}]%
{\includegraphics[scale = 0.21]{Figures/BRQR_MCMC_ExSim_MHmixgaus_Tau_01.eps}}\quad \quad
%%%
\subfigure[$\tau=0.5,T=100$ \label{fig:tau_05_T_100_gauss}]%
{\includegraphics[scale = 0.21]{Figures/BRQR_MCMC_ExSim_MHmixgaus_Tau_05.eps}}\quad \quad
%%%
\subfigure[$\tau=0.9,T=100$ \label{fig:tau_09_T_100_gauss}]%
{\includegraphics[scale = 0.21]{Figures/BRQR_MCMC_ExSim_MHmixgaus_Tau_09.eps}}
%%%
\caption{\footnotesize{Contaminated data example. Comparison between Bayesian quantile regression based on the ALD (blue) and the SEP distribution (red) for different values of the quantile confidence level $\tau=\left(0.1,0.5,0.9\right)$ and sample sizes $T=100$. Shaded areas denote 95\% HPD credible sets.}}
\label{fig:contaminated_data_ald_vs_sep_gaussian}
\end{figure}
%%::::::::::::::::::::::::::::::::::::::::::::::::::::::::::
%

Figure \ref{fig:contaminated_data_ald_vs_sep_gaussian} depicts the estimated regression lines as well as the $95$\% HPD credible sets.
The blue line refers to the ALD estimation while the red line to the SEP one. It can be easily observed that the two curves overlap for $\tau=0.5$ and increasingly diverge for more extreme level quantiles i.e. for $\tau=\left(0.1,0.9\right)$. It is in fact the case that for $\tau=0.5$ the posterior mean of $\alpha$ is very close to one, which implies that the SEP reduces to the the ALD distribution. 
%
%============================================
% SIMULATED EXAMPLE 1. (MIXTURES)
% TABLE REPORTING ESTIMATED PARAMETERS
%============================================
\begin{table}[!ht]
\begin{small}
\begin{center}
\tabcolsep=1.5mm
\resizebox{0.8\columnwidth}{!}{%
\begin{tabular}{ccccccc}
\toprule
\multicolumn{1}{c}{\multirow{2}{*}{Parameter}}& \multicolumn{3}{c}{$ALD$}& \multicolumn{3}{c}{$SEP$} \\
\cmidrule(lr){2-4}\cmidrule(lr){5-7}
&$\tau = 0.1$ &$\tau = 0.5$ &$\tau = 0.9$ &$\tau = 0.1$ &$\tau = 0.5$ &$\tau = 0.9$\\
\hline
\multicolumn{1}{c}{\multirow{2}{*}{$\beta_0$}} 
&-0.391 &1.149 &2.688 &0.186 &1.144 &2.011 \\
&{\it (0.176)} &{\it (0.093)} &{\it (0.237)} &{\it (0.089)} &{\it (0.086)} &{\it (0.100)} \\

\multicolumn{1}{c}{\multirow{2}{*}{$\beta_1$}}
&0.801 &0.735 &1.207 &0.428 &0.709 &0.825 \\
&{\it (0.151)} &{\it (0.106)} &{\it (0.144)} &{\it (0.094)} &{\it (0.093)} &{\it (0.074)} \\

\multicolumn{1}{c}{\multirow{2}{*}{$\sigma$}}
&3.844  &2.105  &3.478  &1.049  &0.862  &0.989  \\
&{\it (0.386)}  &{\it (0.212)}  &{\it (0.349)}  &{\it (0.153)}  &{\it (0.112)}  &{\it (0.150)}  \\

\multicolumn{1}{c}{\multirow{2}{*}{$\alpha$}}
&-  &-  &-  &0.596  &0.832  &0.504  \\
&  &  &  &{\it (0.094)}  &{\it (0.142)}  &{\it (0.068)}  \\
\bottomrule
\end{tabular}}
\caption{\footnotesize{Contaminated data example. Estimated parameters for different levels of the quantile confidence level $\tau=\left(0.1,0.5,0.9\right)$ and  $T = 100$}. Standard deviations in parenthesis.}
\label{tab:contaminated_data_example}
\end{center}
\end{small}
\end{table}
%======================================================
%
As far as we move away from the median, it is notable the differences in the estimated regression quantile parameters under the ALD and the SEP assumption. Looking at the subplots \ref{fig:tau_01_T_100_gauss} and \ref{fig:tau_09_T_100_gauss} it is evident that the intercept and the slope of the regression line obtained using ALD distribution is strongly influenced by the $7.25\%$ of the outliers from the two external components of the mixture. In those cases, the estimated $\alpha$ of the SEP is considerably smaller than one. The estimation of the $\beta$'s parameters is therefore made under a distribution with fatter tails than the ALD strongly penalizing more extreme observations and providing, as consequence, more robust results. 

For the regression parameters Table \ref{tab:contaminated_data_example} contains the estimated posterior means and standard deviations under the ALD and the SEP assumption. Under the data generating process the theoretical slope should be always equal to $0.6$. It can be seen that moving from the median to more extreme quantiles, the posterior mean of the intercept and the slope, estimated with the ALD, is farther from the true value than those obtained with the SEP. In addition it is worth noting that also the standard errors are always lower implying that the estimated are more sharp when using the SEP distribution .
%
%::::::::::::::::::::::::::::::::::::::::::::::::::::::::::
\subsection{Multiple quantile regression}
\label{sec:sim_linear_model}
%::::::::::::::::::::::::::::::::::::::::::::::::::::::::::
%
\noindent In this section, we carry out a Monte Carlo simulation study specifically tailored to evaluate the performance of the model when the Lasso prior (\ref{eq:bqr_linear_prior_reg_par_1}) is considered for the regression parameters. The simulation are similar to the one proposed in  \cite{li_etal.2010} and \cite{alhamzawi_etal.2012}. Specifically, we simulate $T=200$ observations from the linear model $Y_t=\bX_t^{\prime} \bbeta +\varepsilon_t$, where the true values for the regressors are set as follows:
\begin{align}
\text{Simulation 1.} \qquad & \boldsymbol{\beta}=\left(3,1.5,0,0,2,0,0,0\right)^\prime, \nonumber \\
\text{Simulation 2.} \qquad & \boldsymbol{\beta}=\left(0.85,0.85,0.85,0.85,0.85,0.85,0.85,0.85\right)^\prime, \nonumber \\
\text{Simulation 3.} \qquad & \boldsymbol{\beta}=\left(5,0,0,0,0,0,0,0\right)^\prime, \nonumber 
\end{align}
The first simulation corresponds to a sparse regression case, the second to a dense case while the third to a very sparse one. The covariates are independently generated from a $\mathcal{N}\left(0,\Sigma\right)$ with $\sigma_{i,j}=0.5^{\vert i-j\vert}$. Two different choices for the error terms distribution of the generating process distribution are considered for each simulation study. The first choice is a Gaussian distribution $\mathcal{N}\left(\mu,\sigma^2\right)$, with $\mu$ chosen so that the $\tau$-th quantile is 0, while $\sigma^2$ is set as 9, as in \cite{li_etal.2010}. The second choice is a Generalized Student't distribution $\mathcal{GS}\left(\mu,\sigma^2, \nu \right)$ with two degrees of freedom, i.e. $\nu =2$, $\sigma^2=9$ and $\mu$ chosen so that the $\tau$-th quantile is 0.
For three different quantile level, $\tau = \left(0.10, 0.5, 0.9 \right)$ we run 50 simulations for each vector of parameters ($\boldsymbol{\beta}$) and each choice of the error term. 
Table \ref{tab:multiple_regression_sim} reports the median of mean absolute deviation (MMAD), i.e. median$\left(\frac{1}{200} \sum_{t=1}^{200} \mid x_t^\prime\boldsymbol{\hat{\beta}} - x_t^\prime\boldsymbol{\beta}\mid \right)$, and the median of the parameters $\boldsymbol{\hat{\beta}}$ over 50 estimates. To be concise only results for simulation1 are reported since results from the other two simulations are similar. It is immediate to see that the proposed Bayesian quantile regression method based on the SEP likelihood performs better in terms of MMAD for both the distributions of the error term. This evidence confirms that the presence of the shape parameter $\alpha$ in the likelihood allows to better capture the behavior of the data. The estimated shape parameter is indeed greater and lower then 1 in the Gaussian and Generalized Student case respectively, giving us a more reliable estimation of the vector $\boldsymbol{\beta}$ regardless to the tails weight of the distribution of the error term. These results are confirmed in simulation 2 and simulation 3 (not reported here) in which we exasperate the density and the sparsity in the structure of the predictors. Furthermore, looking at the values of $\boldsymbol{\beta}$ we can see that the proposed robust method reduces the bias of the estimates for all the quantile confidence levels. Concerning the shrinkage ability of the proposed estimator we observe that where the true parameters are zero, the SEP distribution is able to identify them better than the ALD.

%============================================
% TABLE REPORTING ESTIMATED PARAMETERS
% SIMULATED EXAMPLE 2. (Multiple LINEAR REGRESSION)
%============================================
\begin{table}[!ht]
\begin{small}
\begin{center}
\tabcolsep=2.0mm
\resizebox{0.9\columnwidth}{!}{%
\begin{tabular}{cccccccc}
\toprule
\multicolumn{1}{c}{\multirow{2}{*}{Error distribution}}&
\multicolumn{1}{c}{\multirow{2}{*}{Par.}}& 
\multicolumn{3}{c}{ALD} &
\multicolumn{3}{c}{SEP} \\
\cmidrule(lr){3-5}\cmidrule(lr){6-8}
& &$\tau=0.10$ &$\tau=0.50$ &$\tau=0.90$&$\tau=0.10$ &$\tau=0.50$ &$\tau=0.90$\\
\hline
%%%
\multicolumn{1}{l}{\multirow{9}{*}{Gaussian}}
&\multicolumn{1}{c}{\multirow{1}{*}{MMAD}}
&1.0131  &1.1008  &1.0579  &0.9096  &1.0955  &0.9708  \\
&\multicolumn{1}{c}{\multirow{1}{*}{$\beta_1$}}
&3.1323  &3.2209  &3.2145  &3.0744  &3.0036  &3.2127  \\
&\multicolumn{1}{c}{\multirow{1}{*}{$\beta_2$}}
&1.6408  &1.4786  &1.6165  &1.7656  &1.4833  &1.6800  \\
&\multicolumn{1}{c}{\multirow{1}{*}{$\beta_3$}}
&0.0444  &0.0294  &0.0267  &0.0428  &0.0228  &0.0186  \\
&\multicolumn{1}{c}{\multirow{1}{*}{$\beta_4$}}
&0.0453  &0.0243  &0.0235  &0.0248  &0.0191  &0.0156  \\
&\multicolumn{1}{c}{\multirow{1}{*}{$\beta_5$}}
&1.2731  &1.2379  &1.3471  &1.3969  &1.8405  &1.4702  \\
&\multicolumn{1}{c}{\multirow{1}{*}{$\beta_6$}}
&0.0185  &0.0161  &0.0205  &0.0124  &0.0127  &0.0128  \\
&\multicolumn{1}{c}{\multirow{1}{*}{$\beta_7$}}
&0.0112  &0.0106  &0.0120  &0.0067  &0.0063  &0.0095  \\
&\multicolumn{1}{c}{\multirow{1}{*}{$\beta_8$}}
&0.0073  &0.0078  &0.0064  &0.0038  &0.0047  &0.0051  \\

\cmidrule(lr){2-8}
\multicolumn{1}{l}{\multirow{9}{*}{Generalized Student t}}
&\multicolumn{1}{c}{\multirow{1}{*}{MMAD}}
&0.5163  &0.1807  &0.4685  &0.4777  &0.1789  &0.4275  \\
&\multicolumn{1}{c}{\multirow{1}{*}{$\beta_1$}}
&3.0630  &2.9884  &2.9874  &3.0826  &2.9877  &2.9934  \\
&\multicolumn{1}{c}{\multirow{1}{*}{$\beta_2$}}
&1.0484  &1.3700  &1.1366  &1.0952  &1.3951  &1.2110  \\
&\multicolumn{1}{c}{\multirow{1}{*}{$\beta_3$}}
&0.0304  &0.0144  &0.0325  &0.0252  &0.0135  &0.0412  \\
&\multicolumn{1}{c}{\multirow{1}{*}{$\beta_4$}}
&0.0258  &0.0181  &0.0162  &0.0263  &0.0163  &0.0138  \\
&\multicolumn{1}{c}{\multirow{1}{*}{$\beta_5$}}
&1.7012  &1.9036  &1.7701  &1.7558  &1.9111  &1.8052  \\
&\multicolumn{1}{c}{\multirow{1}{*}{$\beta_6$}}
&0.0128  &0.0085  &0.0137  &0.0074  &0.0072  &0.0136  \\
&\multicolumn{1}{c}{\multirow{1}{*}{$\beta_7$}}
&0.0055  &0.0057  &0.0101  &0.0052  &0.0066  &0.0082  \\
&\multicolumn{1}{c}{\multirow{1}{*}{$\beta_8$}}
&0.0067  &0.0009  &0.0002  &0.0051  &0.0011  &-0.0021  \\      
\bottomrule
\end{tabular}}
\caption{\footnotesize{Multiple regression simulated data example 1. MMADs and estimated parameters for Simulation 1 under the SEP and ALD assumption for the quantile error term.}}
\label{tab:multiple_regression_sim}
\end{center}
\end{small}
\end{table}
%======================================================
%

%
%::::::::::::::::::::::::::::::::::::::::::::::::::::::::::
\subsection{Non Linear Model}
\label{sec:nonlinear_model}
%::::::::::::::::::::::::::::::::::::::::::::::::::::::::::
%
\noindent In this simulation example we illustrate the performances of  model assumptions when a simple GAM model is considered with a single continuous smooth non--linear function and where the parametric linear components are set to zero. Following Chen and Yu \citeyearpar{chen_yu.2009}, we consider two data generating process $y_t=f_j\left(x_t\right)+\epsilon_t$, for $t=1,2,\dots,T$ and $j=1,2$ where $f_1$ represents the wave function and $f_2$ the doppler function, defined as follows
%%%%
\begin{align}
f_1\left(x\right)&=4\left(x-0.5\right)+2\exp\left(-256\left(x-0.5\right)^2\right)\bbone_{\left(0,1\right)}\left(x\right)\\
f_2\left(x\right)&=\left(0.2x\left(1 - 0.2x\right)\right)^{\frac{1}{2}}\sin\left(\frac{2\pi\left(1+\gamma\right)}{0.2x+\gamma}\right)\bbone_{\left(0,1\right)}\left(x\right),
\end{align}
%%%
with $\gamma=0.15$. These functions are usually used (see also Denison et al. \citeyear{denison_etal.1998}) to check the nonlinear fitting ability of a proposed model. Starting form these two  curves, we generate a sample of $T=200$ and $T=512$ observations for the wave and the doppler functions, respectively using four different sources of error
%%%
\begin{align}
&\text{\bf Gaussian noise},\qquad\epsilon_t\sim\mathcal{N}\left(0,1\right)\\
&\qquad y_t=f_1\left(x_t\right)+\sigma_1\epsilon_t,\nonumber\\
&\qquad y_t=f_2\left(x_t\right)+\sigma_2\epsilon_t,\nonumber\\
& \text{\bf Student--t noise},\qquad\epsilon_t \sim\mathcal{T}_\nu\left(0,1\right),\\
&\qquad y_t=f_1\left(x_t\right)+\sigma_1\epsilon_t,\nonumber\\
&\qquad y_t=f_2\left(x_t\right)+\sigma_2\epsilon_t,\nonumber\\
&\text{\bf Linear heterogeneity},\qquad\epsilon_t\sim\mathcal{T}_\nu\left(0,1\right),\\
&\qquad y_t=f\left(x_t\right)+\sigma_1\left(1 + x\right)\epsilon_t,\nonumber\\
&\qquad y_t=f\left(x_t\right)+\sigma_2\left(1 + x\right)\epsilon_t,\nonumber\\
& \text{\bf Quadratic heterogeneity},\qquad\epsilon_t\sim\mathcal{T}_\nu\left(0,1\right),\\
&\qquad y_t=f_1\left(x_t\right)+\sigma_1\left(1+x_t^2\right)\epsilon_t,\nonumber\\
&\qquad y_t=f_2\left(x_t\right)+\sigma_2\left(1+x_t^2\right)\epsilon_t,\nonumber
\end{align}
%%%%
where $\sigma_1=\sqrt{0.4}$, $\sigma_2=\sqrt{0.1}$ and $\nu=2$. All the considered model specifications are estimated using penalized P--Splines of order 4 imposing a relative large number of equally spaced knots and with a penalization parameter $\delta=2$, as suggested by \cite{eilers_marx.1996}. In particular, we use 20 knots for the wave function and 25 knots for the doppler one because of the presence of many change points. The sampling process is performed using 10,000 iterations with the first 5,000 as burn--in.
Table \ref{tab:nonlinear_sim} shows the average and the standard errors, for 50 repeats, of the mean squared errors ({\it mse}) of three different quantile levels for all the described curves. It can be noted that the SEP outperforms almost uniformly the ALD in terms of {\it mse}. The difference between the two curves is less evident in presence of Gaussian errors, where the ALD shows also a smaller {\it mse} for the extreme quantiles of the wave function. The improvement in terms of estimation bias becomes greater looking at more heavy tailed and heteroskedastic error distributions. Concerning the wave function the SEP shows a {\it mse} that is equal to half of that obtained with the ALD at the extreme quantiles. The same conclusions can be drawn for the doppler function that is generally better estimated than the wave.
%
%============================================
% TABLE REPORTING ESTIMATED PARAMETERS
% SIMULATED DATA EXAMPLE, NONLINEAR REGRESSION
%============================================
\begin{table}[!t]
\begin{small}
\begin{center}
\tabcolsep=1.50mm
\resizebox{0.8\columnwidth}{!}{%
\begin{tabular}{lcccccccc}
\toprule
\multicolumn{1}{l}{\multirow{2}{*}{Model}}&\multicolumn{1}{l}{\multirow{2}{*}{Noise}}& \multicolumn{3}{c}{Wave} & \multicolumn{3}{c}{Doppler} \\
\cmidrule(lr){3-5}\cmidrule(lr){6-8}
& &$\tau = 0.1$ &$\tau = 0.5$ &$\tau = 0.9$&$\tau = 0.1$ &$\tau = 0.5$ &$\tau = 0.9$\\
\hline
\multicolumn{1}{l}{\multirow{8}{*}{\bf ALD}} &\multicolumn{1}{l}{\multirow{2}{*}{Gaussian}}
  &0.0054   &0.0022   &0.0039   &0.0002   &0.0000   &0.0002   \\
& &{\it (0.0171)} &{\it (0.0070)} &{\it (0.0124)} &{\it (0.0007)} &{\it (0.0002)} &{\it (0.0005)} \\
&\multicolumn{1}{l}{\multirow{2}{*}{Student--t}}
  &0.0504   &0.0034   &0.0177  &0.0009   &0.0001   &0.0177   \\
& &{\it (0.1593)} &{\it (0.0108)} &{\it (0.0561)} &{\it (0.0027)} &{\it (0.0045)} &{\it (0.0015)} \\
&\multicolumn{1}{l}{\multirow{2}{*}{Lin. Het.}}
  &0.1035   &0.0054   &0.0627   &0.0059   &0.0002   &0.0039   \\
& &{\it (0.3273)} &{\it (0.0170)} &{\it (0.1979)} &{\it (0.0180)} &{\it (0.0010)} &{\it (0.0124)} \\
&\multicolumn{1}{l}{\multirow{2}{*}{Quad. Het.}} 
  &0.0505   &0.0067   &0.0752   &0.0018   &0.0001   &0.0050   \\
& &{\it (0.1598)} &{\it (0.0210)} &{\it (0.2377)} &{\it (0.0058)} &{\it (0.0006)} &{\it (0.0160)} \\
\hline
\multicolumn{1}{l}{\multirow{8}{*}{\bf SEP}} &\multicolumn{1}{l}{\multirow{2}{*}{Gaussian}} 
  &0.0071   &0.0020   &0.0078  &0.0002   &0.0000   &0.0005   \\
& &{\it (0.0226)} &{\it (0.0063)} &{\it (0.0248)} &{\it (0.0006)} &{\it (0.0002)} &{\it (0.0016)} \\
&\multicolumn{1}{l}{\multirow{2}{*}{Student--t}} 
  &0.0251   &0.0037   &0.0132   &0.0007   &0.0001   &0.0006   \\
& &{\it (0.0795)} &{\it (0.0117)} &{\it (0.0417)} &{\it (0.0022)} &{\it (0.0045)} &{\it (0.0019)} \\
&\multicolumn{1}{l}{\multirow{2}{*}{Lin. Het.}}
  &0.0986   &0.0046   &0.0678 &0.0008   &0.0003   &0.0006   \\
& &{\it (0.3118)} &{\it (0.0145)} &{\it (0.2144)} &{\it (0.0026)} &{\it (0.0010)} &{\it (0.0020)} \\
&\multicolumn{1}{l}{\multirow{2}{*}{Quad. Het.}} 
  &0.0234   &0.0057   &0.0305   &0.0011   &0.0002   &0.0008   \\
& &{\it (0.0741)} &{\it (0.0180)} &{\it (0.0965)} &{\it (0.0035)} &{\it (0.0006)} &{\it (0.0026)} \\
\bottomrule
\end{tabular}}
\caption{\footnotesize{Non--linear regression simulated data example. MSE of the fitted curves with four sources of noise evaluated over the $200$ synthetic replications. Standard deviations in parenthesis.}}
\label{tab:nonlinear_sim}
\end{center}
\end{small}
\end{table}
%======================================================
%
%
%:::::::::::::::::::::::::::::::::::::::::::::::::::::::::::::::::::::
% SECTION: EMPIRICAL APPLICATIONS
%:::::::::::::::::::::::::::::::::::::::::::::::::::::::::::::::::::::
\section{Empirical applications}
\label{sec:empirical_application}
%::::::::::::::::::::::::::::::::::::::::::::::::::::::::::::::::::::::
%
\noindent Three empirical datasets are analyzed in this section: Boston Housing, Munich Rent and Barro growth data. The first dataset is carachterized by the presence of many regressors that allows us to emphasize the usefulness of introducing a lasso prior for the regression parameters. The second one also has a large set of regressors but some of them are characterized by a non linear relation with the response variable. For this dataset we highlight that the assumption of a lasso prior within a robust quantile GAM framework leads us to a more precise estimation process. Finally we propose the use of our robust quantile lasso GAM model to study the Barro growth data by assuming a non linear representation for some regressors. We find a new interesting interpretation of regression parameters while the neoclassical convergence hypothesis is maintained.
%
%:::::::::::::::::::::::::::::::::::::::::::::::::::::::::::::::::::::
% SECTION: BOSTON HOUSING DATA
%:::::::::::::::::::::::::::::::::::::::::::::::::::::::::::::::::::::
\subsection{Boston housing data}
\label{sec:empirical_application_boston}
%::::::::::::::::::::::::::::::::::::::::::::::::::::::::::::::::::::::
%
\noindent In this section we analyze the Boston Housing data first considered by Harrison and Rubinfeld \citeyearpar{harrison_rubinfeld.1978} studying the influence of pollution on house prices. In particular in this paper we consider the corrected data of Li \textit{et al.} \citeyearpar{li_etal.2010}. The model is based on the log-transformed corrected median values of owner-occupied housing (values in USD 1000) as dependent variable while several exogenous variables are taken into account: the point longitudes and latitudes in decimal degrees (LON and LAT respectively), the per capita crime (CRIM), the proportions of residential land zoned and non-retail business acres per town (ZN and INDUS respectively), a dummy equal to 1 if tract borders Charles River (CHAS), the nitric oxides concentration (NOX), the average numbers of rooms per dwelling (RM), the proportions of owner-occupied units built prior to 1940 (AGE), the weighted distances to five Boston employment centers (DIS), the index of accessibility to radial highways per town (RAD), the full-value property-tax rate per town (TAX), the pupil-teacher ratios per town (PTRATIO), the transformed Black population proportion (B) and percentage values of lower status population (LSTAT) as regressors.
To provide a complete description of the conditional distribution of the response variable we consider five different choices of $\tau$, i.e. $0.10$ $0.25$ $0.50$ $0.75$ $0.90$. Moreover in order to show the opportunity of assuming a Lasso prior for the regressor parameters and to show its performances we consider also a Guaussian prior distribution. Results are showed in Table \ref{tab:boston_ex} where it is evident that independently of  the choice of the prior distribution all the variables appear in the table with a sign in line with previous studies on the same dataset. Nevertheless, Lasso prior should be preferred at least for two reasons. First it uniformly provides smaller posterior standard errors, the estimated coefficients appear to be more reliable at extreme quantile levels, i.e. $\tau = 0.1$ or $\tau = 0.9$ for which the estimated parameters obtained under Gaussian prior choice become very unstable for some variables.

%
%============================================
% TABLE REPORTING ESTIMATED PARAMETERS
% Linear multiple model Boston
%============================================
\begin{table}[!t]\centering
\begin{small}
\begin{center}
\medskip
\tabcolsep=1.0mm
\resizebox{1.0\columnwidth}{!}{%
\begin{tabular}{lcccccccccc}
\toprule
\multicolumn{1}{l}{\multirow{2}{*}{Variable}}& \multicolumn{5}{c}{Gaussian Prior} & \multicolumn{5}{c}{Lasso Prior} \\
\cmidrule(lr){2-6}\cmidrule(lr){7-11}
&$\tau=0.10$ &$\tau=0.25$  &$\tau=0.50$ &$\tau=0.75$ &$\tau=0.90$ &$\tau=0.10$ &$\tau=0.25$  &$\tau=0.50$ &$\tau = 0.75$ &$\tau=0.90$\\

\hline
%%%
\multicolumn{1}{l}{\multirow{2}{*}{LON}}  
&-0.0614 &-0.0297 &-0.0203 &-0.0114 &0.0072 &-0.0287 &-0.0213 &-0.0258 &-0.0261 &-0.0161 \\
&{\it(0.0364)} &{\it(0.0416)} &{\it(0.0450)} &{\it(0.0555)} &{\it(0.0434)} &{\it(0.0166)} &{\it(0.0172)} &{\it(0.0187)} &{\it(0.0176)} &{\it(0.0157)} \\    

\multicolumn{1}{l}{\multirow{2}{*}{LAT}}
&-0.0250 &0.0251 &0.0386 &0.0531 &0.0816 &0.0103 &0.0221 &0.0168 &0.0121 &0.0238 \\
&{\it(0.0582)} &{\it(0.0678)} &{\it(0.0732)} &{\it(0.0937)} &{\it(0.0729)} &{\it(0.0276)} &{\it(0.0290)} &{\it(0.0311)} &{\it(0.0296)} &{\it(0.0262)} \\      

\multicolumn{1}{l}{\multirow{2}{*}{CRIM}} 
&-0.0230 &-0.0177 &-0.0093 &-0.0063 &-0.0058 &-0.0241 &-0.0178 &-0.0093 &-0.0059 &-0.0032 \\
&{\it(0.0032)} &{\it(0.0027)} &{\it(0.0015)} &{\it(0.0016)} &{\it(0.0021)} &{\it(0.0028)} &{\it(0.0027)} &{\it(0.0014)} &{\it(0.0017)} &{\it(0.0014)} \\

\multicolumn{1}{l}{\multirow{2}{*}{ZN}}   
   &0.0000 &0.0005 &0.0008 &0.0012 &0.0012 &-0.0000 &0.0006 &0.0009 &0.0010 &0.0009 \\
&{\it(0.0003)} &{\it(0.0003)} &{\it(0.0004)} &{\it(0.0004)} &{\it(0.0003)} &{\it(0.0003)} &{\it(0.0003)} &{\it(0.0004)} &{\it(0.0003)} &{\it(0.0002)} \\

\multicolumn{1}{l}{\multirow{2}{*}{INDUS}}   
&0.0030 &0.0023 &0.0027 &0.0014 &-0.0013 &0.0016 &0.0017 &0.0016 &0.0010 &-0.0027 \\
&{\it(0.0016)} &{\it(0.0014)} &{\it(0.0016)} &{\it(0.0017)} &{\it(0.0013)} &{\it(0.0015)} &{\it(0.0014)} &{\it(0.0016)} &{\it(0.0017)} &{\it(0.0011)} \\

\multicolumn{1}{l}{\multirow{2}{*}{CHAS}}    
&0.0482 &0.0464 &0.0597 &0.0737 &0.1134 &0.0183 &0.0377 &0.0411 &0.0448 &0.0834 \\
&{\it(0.0264)} &{\it(0.0197)} &{\it(0.0264)} &{\it(0.0308)} &{\it(0.0302)} &{\it(0.0190)} &{\it(0.0205)} &{\it(0.0236)} &{\it(0.0275)} &{\it(0.0312)} \\

\multicolumn{1}{l}{\multirow{2}{*}{NOX}}     
&-0.3667 &-0.2642 &-0.3672 &-0.4465 &-0.3204 &-0.0397 &-0.0604 &-0.0480 &-0.0416 &-0.0222 \\
&{\it(0.1324)} &{\it(0.0946)} &{\it(0.1119)} &{\it(0.1424)} &{\it(0.1191)} &{\it(0.0463)} &{\it(0.0575)} &{\it(0.0551)} &{\it(0.0524)} &{\it(0.0395)} \\

\multicolumn{1}{l}{\multirow{2}{*}{RM}} 
&0.2050 &0.2259 &0.2139 &0.2007 &0.2067 &0.2318 &0.2299 &0.2129 &0.2262 &0.2347 \\
&{\it(0.0258)} &{\it(0.0141)} &{\it(0.0175)} &{\it(0.0255)} &{\it(0.0154)} &{\it(0.0157)} &{\it(0.0131)} &{\it(0.0173)} &{\it(0.0201)} &{\it(0.0142)} \\
     
\multicolumn{1}{l}{\multirow{2}{*}{AGE}}  
   &-0.0013 &-0.0016 &-0.0009 &-0.0000 &0.0006 &-0.0019 &-0.0018 &-0.0011 &-0.0008 &0.0002 \\
&{\it(0.0005)} &{\it(0.0003)} &{\it(0.0004)} &{\it(0.0006)} &{\it(0.0003)} &{\it(0.0003)} &{\it(0.0003)} &{\it(0.0004)} &{\it(0.0005)} &{\it(0.0003)} \\

\multicolumn{1}{l}{\multirow{2}{*}{DIS}}  
   &-0.0337 &-0.0342 &-0.0330 &-0.0337 &-0.0313 &-0.0279 &-0.0303 &-0.0268 &-0.0267 &-0.0257 \\
&{\it(0.0063)} &{\it(0.0054)} &{\it(0.0062)} &{\it(0.0063)} &{\it(0.0043)} &{\it(0.0050)} &{\it(0.0047)} &{\it(0.0061)} &{\it(0.0054)} &{\it(0.0033)} \\

\multicolumn{1}{l}{\multirow{2}{*}{RAD}}   
  &0.0113 &0.0100 &0.0074 &0.0106 &0.0118 &0.0103 &0.0089 &0.0077 &0.0095 &0.0099 \\
&{\it(0.0025)} &{\it(0.0018)} &{\it(0.0024)} &{\it(0.0022)} &{\it(0.0019)} &{\it(0.0022)} &{\it(0.0016)} &{\it(0.0027)} &{\it(0.0021)} &{\it(0.0014)} \\

\multicolumn{1}{l}{\multirow{2}{*}{TAX}}   
&-0.0007 &-0.0006 &-0.0005 &-0.0004 &-0.0004 &-0.0007 &-0.0006 &-0.0005 &-0.0005 &-0.0004 \\
&{\it(0.0001)} &{\it(0.0001)} &{\it(0.0001)} &{\it(0.0001)} &{\it(0.0001)} &{\it(0.0001)} &{\it(0.0001)} &{\it(0.0001)} &{\it(0.0001)} &{\it(0.0001)} \\
  
\multicolumn{1}{l}{\multirow{2}{*}{PRATIO}} 
&-0.0295 &-0.0292 &-0.0318 &-0.0302 &-0.0253 &-0.0300 &-0.0270 &-0.0280 &-0.0245 &-0.0199 \\
&{\it(0.0037)} &{\it(0.0032)} &{\it(0.0039)} &{\it(0.0047)} &{\it(0.0033)} &{\it(0.0024)} &{\it(0.0027)} &{\it(0.0037)} &{\it(0.0035)} &{\it(0.0027)} \\ 
\multicolumn{1}{l}{\multirow{2}{*}{B}} 
&0.0006 &0.0006 &0.0007 &0.0007 &0.0006 &0.0006 &0.0006 &0.0007 &0.0008 &0.0009 \\
&{\it(0.0001)} &{\it(0.0001)} &{\it(0.0001)} &{\it(0.0001)} &{\it(0.0002)} &{\it(0.0001)} &{\it(0.0001)} &{\it(0.0001)} &{\it(0.0001)} &{\it(0.0001)} \\     
\multicolumn{1}{l}{\multirow{2}{*}{LSTAT}}  
&-0.0186 &-0.0172 &-0.0189 &-0.0195 &-0.0191 &-0.0161 &-0.0173 &-0.0205 &-0.0179 &-0.0163 \\
&{\it(0.0027)} &{\it(0.0019)} &{\it(0.0023)} &{\it(0.0028)} &{\it(0.0015)} &{\it(0.0018)} &{\it(0.0019)} &{\it(0.0023)} &{\it(0.0025)} &{\it(0.0015)} \\
\multicolumn{1}{l}{\multirow{2}{*}{$\sigma$}}
&0.2143 &0.1954 &0.2142 &0.2284 &0.1906 &0.1948 &0.1941 &0.2147 &0.2105 &0.1722 \\
&{\it(0.0243)} &{\it(0.0198)} &{\it(0.0208)} &{\it(0.0251)} &{\it(0.0222)} &{\it(0.0208)} &{\it(0.0197)} &{\it(0.0208)} &{\it(0.0237)} &{\it(0.0202)} \\

\multicolumn{1}{l}{\multirow{2}{*}{$\alpha$}}
&0.7565 &0.7825 &0.8440 &0.7929 &0.6039 &0.7027 &0.7766 &0.8403 &0.7401 &0.5674 \\
&{\it(0.0603)} &{\it(0.0558)} &{\it(0.0620)} &{\it(0.0627)} &{\it(0.0390)} &{\it(0.0470)} &{\it(0.0541)} &{\it(0.0602)} &{\it(0.0569)} &{\it(0.0335)} \\

\bottomrule
\end{tabular}}
\caption{\footnotesize{Linear regression model results for Boston dataset. Standard deviations in parenthesis.}}
\label{tab:boston_ex}
\end{center}
\end{small}
\end{table}
%======================================================
%

%
%:::::::::::::::::::::::::::::::::::::::::::::::::::::::::::::::
% SECTION: MUNICH RENTAL GUIDE
%:::::::::::::::::::::::::::::::::::::::::::::::::::::::::::::::
\subsection{Munich rental guide}
\label{sec:empirical_application_munich}
%::::::::::::::::::::::::::::::::::::::::::::::::::::::::::::::::::::::
%
\noindent 
In section \ref{sec:nonlinear_model} we provide empirical evidence that the SEP distribution produces more reliable estimates of the conditional quantile in presence of heteroskedasticity and heavy tails. To provide a real data example we analyze the very well known 2003 Munich rental dataset, notoriously characterized by the presence of heterogeneous variability. Furthermore, several analyses of this dataset (see for example Kneib et al, \citeyear{kneib_etal.2011} and Mayr et al, \citeyear{mayr_etal.2012}) showed the presence of  a spatial effects modeled by considering a parameter for each of the 380 districts of Munich. For this reason the parameter space handled is quite wide highlighting the need of considering a variable selection approach. Here therefore, we assume a Lasso prior distribution on the unknown parameters in line with the on proposed in \ref{eq:prior_theta_mald} and we compare its performances with a Gaussian prior assumption.
The response variable is the rent in Euro per square meters for a flat in Munich. Two sets of covariates describe linear and non linear relations between the rent and its determinants. The linear predictors are a set of 13 dummies for the goodness of location, the goodness of rooms and the number of rooms in the flat. The floor size and the year of construction have instead a non linear impact on the response variable. Finally, the spatial location of the flat allows to implement a geoadditive model of the kind introduced by Kammann and Wand \citeyearpar{kammann_wand.2003}. To this aim we use a Bayesian semi--parametric quantile regression model with a spatial effect similar to the one considered in Rue and Held \citeyearpar{rue_held.2005} and Yue and Rue \citeyearpar{yue_rue.2011} among the others. A complete description of the dataset can be found in Rue and Held \citeyearpar{rue_held.2005}.\newline
%%%%
\indent We estimate the $\tau$-th conditional quantile for the rent $\r_t$ , i.e., $Q_\tau \left(r_t \mid \bx_{t}, \bz_{t}\right)$ from the following model:
%%%%
\begin{align}
r_t &=q_{t,\tau}+\epsilon_t\nonumber\\
q_{t,\tau} &=\bx_{t}^{\prime}\bbeta_\tau+f_{s,\tau}\left(z_{s,t}\right)+f_{y,\tau}\left(z_{y,t}\right)+f_{l,\tau}\left(z_{l,t}\right),
\label{eq:munch_model}
\end{align}
%%%
where $t=1,2,\dots,2035$; $\epsilon_t$ is the error term with zero $\tau$--$th$ quantile and constant variance, $\bx_{t}$ is the whole set of dummies treated as linear parametric predictors, $\bz_{t}=(z_{s,t}, z_{y,t}, z_{l,t}) $ are the predictor variable for \qmo size\qmcsp,  \qmo year\qmcsp and \qmo spatial\qmcsp effect while $f_{s, \tau}$ $f_{y, \tau}$ and $f_{l, \tau}$ are their non linear functions. The estimation procedure of three quantile confidence levels $\tau=\left(0.25,0.5,0.75\right)$ has been performed using the Adaptive MCMC procedure for GAM models described in subsection \ref{sec:Adaptive_MCMC_GAM}. \newline
Figures \ref{fig:munich_fig_gaus_spline} and \ref{fig:munich_fig_lasso_spline} show the estimated non linear effect for the year of construction and the floor size using Gaussian and Lasso prior respectively. The points below each sub-figure represent the available observations for each value of the covariates while dotted lines represent the 95\% posterior credible intervals. We can observe that both priors provide similar estimated splines for the effect of the floor size on the house prices. It can be seen that small flat (less that $40 m^2$) has a very high rent for square meters while for big flat the rent remains almost unchanged. Concerning the year of construction, the estimated splines are apparently quite different under the two prior specifications but they actually contain similar information. In fact, looking at the level of the variable and the confidence intervals the effect of this covariate can be approximatively considered equal to zero until the 1990, when a clear positive and increasing effect is shown under both prior specification. Figure \ref{fig:munich_fig_map} displays the estimated spatial effects of 380 subquarters in Munich. Values are normalized to be in the range $\left(0,1\right)$. As expected, for both Gaussian and Lasso prior, rents are high in the centre of Munich and some well--known districts, while it becomes lower on the margins.
Finally, estimated posterior means and standard deviations for the linear parametric effects are shown in Table \ref{tab:munich_linear_predictors}. The signs of the variables are in line with previous works but new interesting results are suggested under Lasso prior in the estimation of the effect of "No hot water", "No central heating" and "6 Rooms". In particular, Lasso prior discriminates further the effect of these variables for each quantile. Indeed, we can see that the absence of hot water and the presence of 6 Rooms have a statistical significant effect only on expensive house, i.e. for $\tau = 0.75$ while the opposite occurs considering the absence of central heating. We think that these results highlight more consistently the variety of the consumption choices due to different budget constraints. It is worth noting that Lasso prior correctly shrinks the effect of "Special bathroom interior" that is not very significant when estimated using Gaussian prior.

%::::::::::::::::::::::::::::::::::::::::::::::::::::::::::::::::::
% FIGURE: Munich spline Gaussian Prior
%::::::::::::::::::::::::::::::::::::::::::::::::::::::::::::::::::
\begin{figure}[!t]
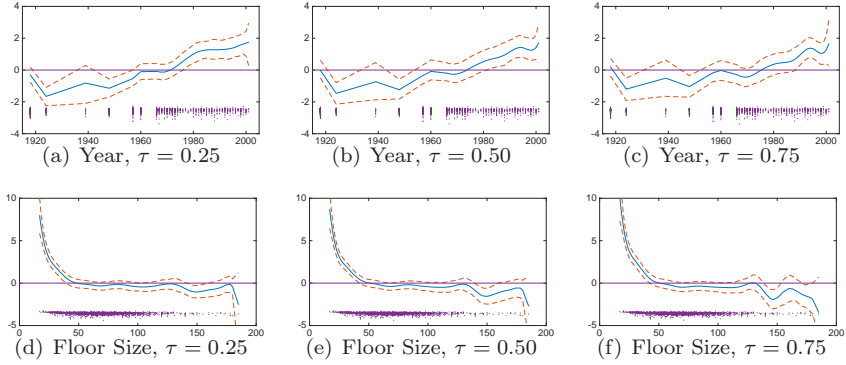

\centering
\begin{tabular}{ccc}
\subfigure[Year, $\tau=0.25$ \label{fig:munich_gaus_year_tau_025}]%
{\includegraphics[scale=0.2]{Figures/Munich_gaus_year_Tau_025.eps}} &
%%%%
\subfigure[Year, $\tau=0.50$ \label{fig:munich_gaus_year_tau_050}]%
{\includegraphics[scale=0.2]{Figures/Munich_gaus_year_Tau_05.eps}} &
%%%%
\subfigure[Year, $\tau=0.75$ \label{fig:munich_gaus_year_tau_075}]%
{\includegraphics[scale=0.2]{Figures/Munich_gaus_year_Tau_075.eps}} \\
%%%
\subfigure[Floor Size, $\tau=0.25$ \label{fig:munich_gaus_fs_tau_025}]%
{\includegraphics[scale = 0.2]{Figures/Munich_gaus_fs_Tau_025.eps}} &
%%%%
\subfigure[Floor Size, $\tau=0.50$ \label{fig:munich_gaus_fs_tau_05}]%
{\includegraphics[scale = 0.2]{Figures/Munich_gaus_fs_Tau_05.eps}} &
%%%%
\subfigure[Floor Size, $\tau=0.75$ \label{fig:munich_gaus_fs_tau_075}]%
{\includegraphics[scale = 0.2]{Figures/Munich_gaus_fs_Tau_075.eps}} \\
%%%
\end{tabular}
\caption{\footnotesize{Estimated nonparametric effect using Gaussian prior with 95\% credible bands for Munich data.}
\label{fig:munich_fig_gaus_spline}}
\end{figure}

%::::::::::::::::::::::::::::::::::::::::::::::::::::::::::::::::::
% FIGURE: Munich spline Lasso Prior
%::::::::::::::::::::::::::::::::::::::::::::::::::::::::::::::::::
\begin{figure}[!t]
\centering
\begin{tabular}{ccc}
\subfigure[Year, $\tau=0.25$ \label{fig:munich_lasso_year_tau_025}]%
{\includegraphics[scale=0.2]{Figures/Munich_lasso_year_Tau_025.eps}} &
%%%%
\subfigure[Year, $\tau=0.50$ \label{fig:munich_lasso_year_tau_05}]%
{\includegraphics[scale=0.2]{Figures/Munich_lasso_year_Tau_05.eps}} &
%%%%
\subfigure[Year, $\tau=0.75$ \label{fig:munich_lasso_year_tau_075}]%
{\includegraphics[scale=0.2]{Figures/Munich_lasso_year_Tau_075.eps}} \\
%%%
\subfigure[Floor Size, $\tau=0.25$ \label{fig:munich_lasso_fs_tau_025}]%
{\includegraphics[scale = 0.2]{Figures/Munich_lasso_fs_Tau_025.eps}} &
%%%%
\subfigure[Floor Size, $\tau=0.50$ \label{fig:munich_lasso_fs_tau_050}]%
{\includegraphics[scale = 0.2]{Figures/Munich_lasso_fs_Tau_05.eps}} &
%%%%
\subfigure[Floor Size, $\tau=0.75$ \label{fig:munich_lasso_fs_tau_075}]%
{\includegraphics[scale = 0.2]{Figures/Munich_lasso_fs_Tau_075.eps}} \\
%%%
\end{tabular}
\caption{\footnotesize{Estimated nonparametric effect using Lasso prior with 95\% credible bands for Munich data.}
\label{fig:munich_fig_lasso_spline}}
\end{figure}
%::::::::::::::::::::::::::::::::::::::::::::::::::::::::::::::::::

%::::::::::::::::::::::::::::::::::::::::::::::::::::::::::::::::::
% FIGURE: Munich map Gaussian and Lasso Prior
%::::::::::::::::::::::::::::::::::::::::::::::::::::::::::::::::::
\begin{figure}[!t]
\centering%
\subfigure[$\tau=0.25$ \label{fig:munich_gaus_map_tau_025}]%
{\includegraphics[scale=0.38]{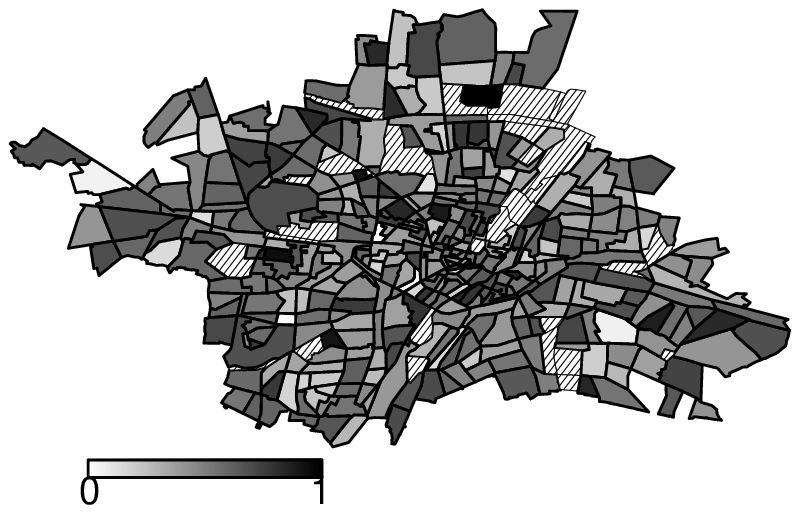}}%\quad
%%%%
\subfigure[$\tau=0.50$ \label{fig:munich_gaus_map_tau_05}]%
{\includegraphics[scale=0.38]{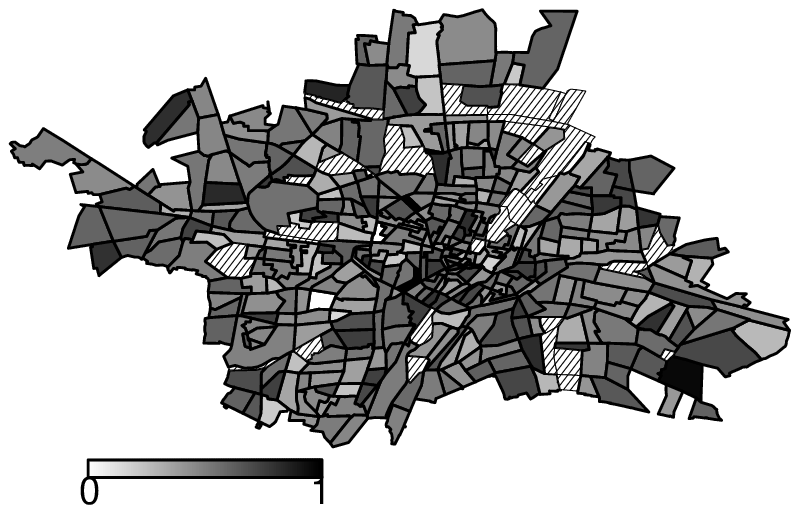}}%\quad
%%%%
\subfigure[$\tau=0.75$ \label{fig:munich_gaus_map_tau_075}]%
{\includegraphics[scale=0.38]{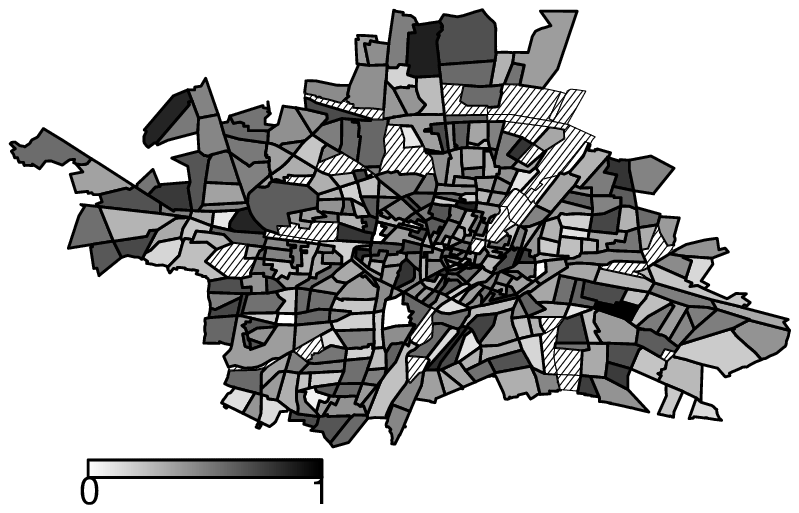}}
%%%
\subfigure[$\tau=0.25$ \label{fig:munich_lasso_map_tau_025}]%
{\includegraphics[scale = 0.38]{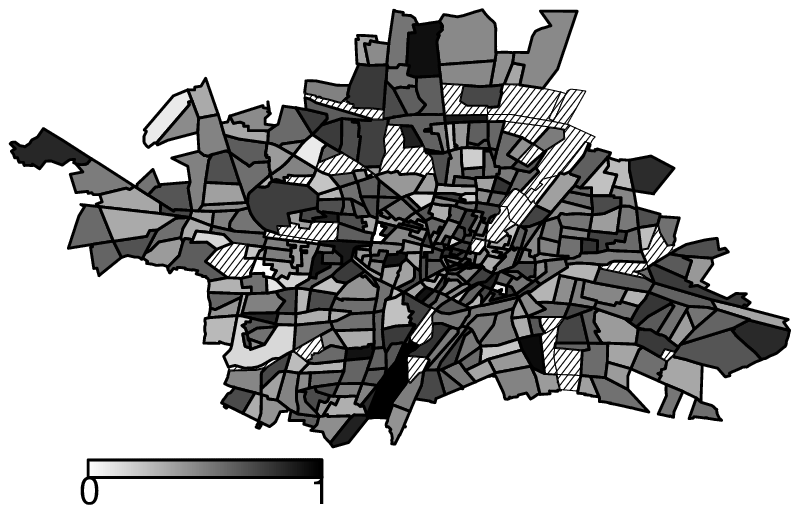}}%\quad
%%%%
\subfigure[$\tau=0.50$ \label{fig:munich_lasso_map_tau_050}]%
{\includegraphics[scale = 0.38]{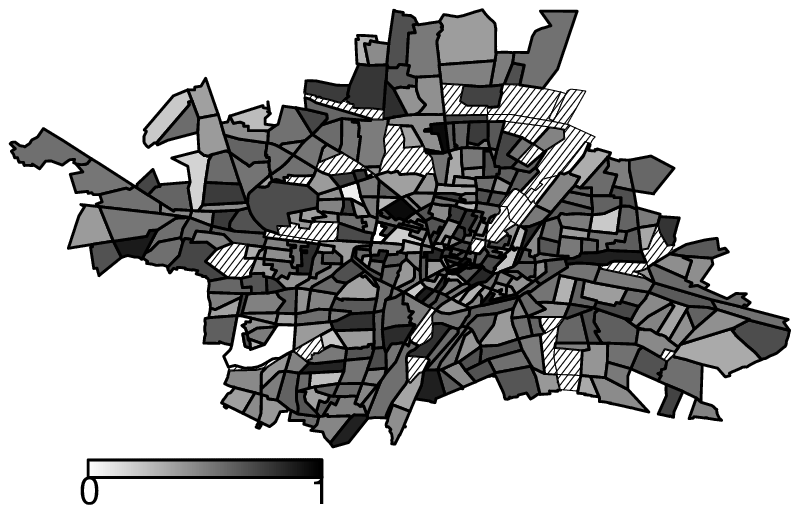}}%\quad
%%%%
\subfigure[$\tau=0.75$ \label{fig:munich_lasso_map_tau_075}]%
{\includegraphics[scale = 0.38]{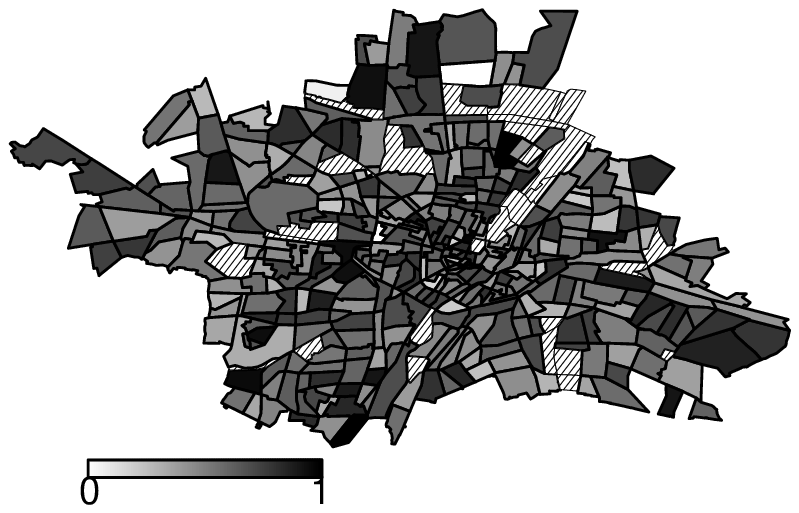}}
%%%
\caption{\footnotesize{Estimated spatial effects using Gaussian (first row) and Lasso (second row) prior for the 380 subquarters of Munich}
\label{fig:munich_fig_map}}
\end{figure}
%::::::::::::::::::::::::::::::::::::::::::::::::::::::::::::::::::

%:::::::::::::::::::::::::::::::::::::::::::::::::::::::::::::::::::::
%TABLE: MUNICH
%:::::::::::::::::::::::::::::::::::::::::::::::::::::::::::::::::::::
\begin{table}[!t]\centering
\medskip
\tabcolsep=1.0mm
\resizebox{1.0\columnwidth}{!}{%
\begin{tabular}{lcccccccccc}
\toprule
\multicolumn{1}{l}{\multirow{2}{*}{Variable}}& \multicolumn{3}{c}{Gaussian Prior} & \multicolumn{3}{c}{Lasso Prior} \\
\cmidrule(lr){2-4}\cmidrule(lr){5-7}
&$\tau=0.25$  &$\tau=0.50$ &$\tau=0.75$ &$\tau=0.25$  &$\tau=0.50$ &$\tau = 0.75$ \\
% The body of the table:
\hline
\multicolumn{1}{l}{\multirow{2}{*}{Good location}} 
 &0.6466  &0.7454  &0.7606  &0.6304  &0.7042  &0.5922  \\
 &{\it(0.0925)}  &{\it(0.0880)}  &{\it(0.0857)}  &{\it(0.1230)}	  &{\it(0.1124)}  &{\it(0.1039)}  \\
\multicolumn{1}{l}{\multirow{2}{*}{Excellent location}}
 &1.4213  &1.6999  &1.9381  &1.4136  &1.6305  &1.8450  \\
 &{\it(0.2770)}  &{\it(0.2879)}  &{\it(0.2829)}  &{\it(0.2376)}  &{\it(0.2454)}  &{\it(0.2527)}  \\
\multicolumn{1}{l}{\multirow{2}{*}{No hot water}} 
 &-1.3361  &-1.8499  &-2.2199  &-0.0353  &-0.0335  &-2.7652  \\
 &{\it(0.2410)}  &{\it(0.2664)}  &{\it(0.2738)}  &{\it(0.0475)}  &{\it(0.0454)}  &{\it(0.3731)}  \\       
\multicolumn{1}{l}{\multirow{2}{*}{No central heating}}  
 &-1.5449  &-1.4206  &-1.0610  &-1.9830  &-2.0557  &-0.1316  \\
 &{\it(0.1759)}  &{\it(0.1957)}  &{\it(0.1867)}  &{\it(0.1872)}  &{\it(0.2076)}  &{\it(0.1193)}  \\          
\multicolumn{1}{l}{\multirow{2}{*}{No tiles in bathroom}} 
 &-0.4260  &-0.5792  &-0.5942  &-0.1597  &-0.3277  &-0.2576  \\
 &{\it(0.1079)}  &{\it(0.1091)}  &{\it(0.1140)}  &{\it(0.1143)}  &{\it(0.1453)}  &{\it(0.1575)}  \\
\multicolumn{1}{l}{\multirow{2}{*}{Special bathroom interior}} 
 &0.3926  &0.3803  &0.4897  &0.0598  &0.0824  &0.0598  \\
 &{\it(0.1462)}  &{\it(0.1580)}  &{\it(0.1489)}  &{\it(0.0606)}  &{\it(0.0779)}  &{\it(0.0581)}  \\
\multicolumn{1}{l}{\multirow{2}{*}{Special kitchen interior}}
 &0.9145  &1.1405  &1.2480  &1.0824  &1.3077  &1.3153  \\
 &{\it(0.1787)}  &{\it(0.1564)}  &{\it(0.1740)}  &{\it(0.3355)}  &{\it(0.2239)}  &{\it(0.1989)}  \\
\multicolumn{1}{l}{\multirow{2}{*}{1 Room}}
 &7.1564  &8.3633  &9.3637  &6.8372  &8.6886  &10.0425  \\
 &{\it(0.1754)}  &{\it(0.1700)}  &{\it(0.1751)}  &{\it(0.1926)}  &{\it(0.1729)}  &{\it(0.1652)}  \\
\multicolumn{1}{l}{\multirow{2}{*}{2 Rooms}}
 &6.9968  &8.4530  &10.0062  &6.5432  &8.5664  &10.3823  \\
 &{\it(0.1002)}  &{\it(0.1024)}  &{\it(0.0922)}  &{\it(0.1154)}  &{\it(0.1126)}  &{\it(0.1084)}  \\
\multicolumn{1}{l}{\multirow{2}{*}{3 Rooms}} 
 &6.7542  &8.1964  &9.7554  &6.2149  &8.1500  &10.0117  \\
 &{\it(0.0998)}  &{\it(0.0927)}  &{\it(0.0947)}  &{\it(0.1159)}  &{\it(0.1101)}  &{\it(0.1076)}  \\
\multicolumn{1}{l}{\multirow{2}{*}{4 Rooms}} 
 &6.2745  &7.7603  &9.2060  &5.7041  &7.5529  &9.3650  \\
 &{\it(0.1459)}  &{\it(0.1404)}  &{\it(0.1490)}  &{\it(0.1557)}  &{\it(0.1644)}  &{\it(0.1601)}  \\
\multicolumn{1}{l}{\multirow{2}{*}{5 Rooms}} 
 &6.0948  &7.6821  &9.6398  &5.0121  &7.0744  &9.2734  \\
 &{\it(0.2659)}  &{\it(0.3047)}  &{\it(0.2956)}  &{\it(0.4154)}  &{\it(0.3538)}  &{\it(0.3514)}  \\
\multicolumn{1}{l}{\multirow{2}{*}{6 Rooms}}
 &6.3496  &7.6293  &9.1707  &0.4008  &0.4462  &8.4068  \\
 &{\it(0.4450)}  &{\it(0.4479)}  &{\it(0.4661)}  &{\it(0.0658)}  &{\it(0.0752)}  &{\it(0.5960)}  \\

\bottomrule
\end{tabular}}
\caption{\footnotesize{Posterior Means and standard errors (in parenthesis) of the linear regressors for three different quantile levels.}}
\label{tab:munich_linear_predictors}
\end{table}
%======================================================
%

%
%:::::::::::::::::::::::::::::::::::::::::::::::::::::::::::::::::::::
% SECTION: BARRO GROWTH DATA
%:::::::::::::::::::::::::::::::::::::::::::::::::::::::::::::::::::::
\subsection{Barro growth data}
\label{sec:empirical_application_barro}
%::::::::::::::::::::::::::::::::::::::::::::::::::::::::::::::::::::::
%
\noindent As final application, we analyze the dataset related to the international economic growth model firstly considered by Barro and Sala i-Martin \citeyearpar{barro_martin.1995} and extended to the quantile regression framework by Koenker and Machado \citeyearpar{koenker_machado.1999}. Since standard OLS model do not provide a clear result about the convergence hypothesis of neoclassical growth models, several papers have analyzed growth equations using quantile regression technique with prominent results. In their paper Barreto and Hughes \citeyearpar{barreto_hughes.2004} show that the determinants of the economic growth for countries in the left or right tails of the distribution are very different from those in the mean. Mello and Perrelli \citeyearpar{mello_perrelli.2003} use quantile regression to find evidence in favor of the convergence hypothesis for countries in the upper quantile of the conditional distribution of the response variable using the Barro growth model (Barro, \citeyear{barro.1991}). Finally, Laurini \citeyearpar{laurini.2007} uses spline functions in testing the convergence hypothesis with a dataset of Brazilian municipalities. To the best of our knowledge, this is the first attempt to propose a Bayesian quantile Lasso GAM model in order to study the impact of both linear and non linear effect of the covariates on the cross country GDP growth using the Barro and Sala i-Martin \citeyearpar{barro_martin.1995} model. The dataset contains 161 world nations observed for 13 covariates covering the two periods 1965-75 and 1975-85. With a quantile GAM model we are able to combine the theory of non linear return to education with that of economic convergence using spline functions to model the variables "Male secondary school" (MSS), "Female Secondary school" (FSS), "Male Higher Education" (MHE) and "Female Higher Education" (FHE) while we adopt a linear representation for the remaining variables.

%
%:::::::::::::::::::::::::::::::::::::::::::::::::::::::::::::::::::::
% FIGURE: BARRO
%:::::::::::::::::::::::::::::::::::::::::::::::::::::::::::::::::::::
\begin{figure}[!t]
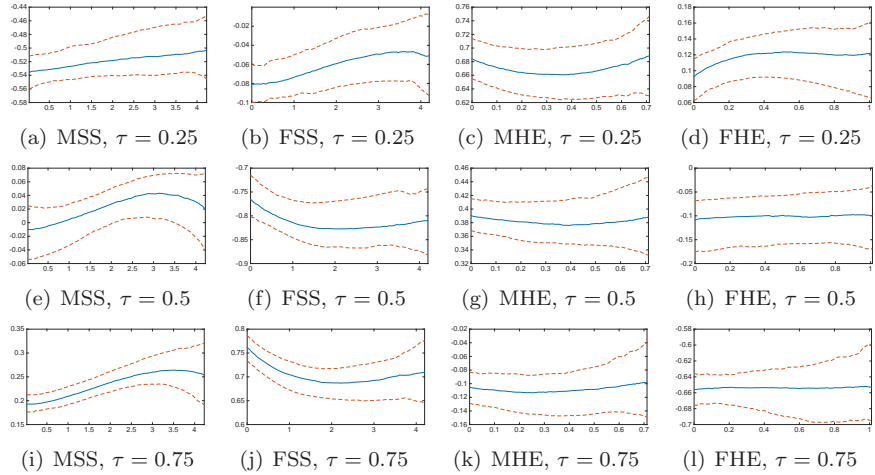

\centering%
%%%% Tau 0.25
\subfigure[MSS, $\tau=0.25$\label{fig:barro_mss_tau_025}]%
{\includegraphics[scale=0.15]{Figures/Barro_mss_Tau_025.eps}}\quad
%%%%
\subfigure[FSS, $\tau=0.25$ \label{fig:barro_fss_tau_025}]%
{\includegraphics[scale=0.15]{Figures/Barro_fss_Tau_025.eps}}\quad
%%%%
\subfigure[MHE, $\tau=0.25$ \label{fig:barro_mhe_tau_025}]%
{\includegraphics[scale = 0.15]{Figures/Barro_mhe_Tau_025.eps}}\quad
%%%
\subfigure[FHE, $\tau=0.25$ \label{fig:barro_fhe_tau_025}]%
{\includegraphics[scale=0.15]{Figures/Barro_fhe_Tau_025.eps}}
%%% Tau 0.5
\subfigure[MSS, $\tau=0.5$\label{fig:barro_mss_tau_05}]%
{\includegraphics[scale=0.15]{Figures/Barro_mss_Tau_05.eps}}\quad
%%%%
\subfigure[FSS, $\tau=0.5$ \label{fig:barro_fss_tau_05}]%
{\includegraphics[scale=0.15]{Figures/Barro_fss_Tau_05.eps}}\quad
%%%%
\subfigure[MHE, $\tau=0.5$ \label{fig:barro_mhe_tau_05}]%
{\includegraphics[scale = 0.15]{Figures/Barro_mhe_Tau_05.eps}}\quad
%%%
\subfigure[FHE, $\tau=0.5$ \label{fig:barro_fhe_tau_05}]%
{\includegraphics[scale=0.15]{Figures/Barro_fhe_Tau_05.eps}}
%%%% Tau 0.75
\subfigure[MSS, $\tau=0.75$\label{fig:barro_mss_tau_075}]%
{\includegraphics[scale=0.15]{Figures/Barro_mss_Tau_075.eps}}\quad
%%%%
\subfigure[FSS, $\tau=0.75$ \label{fig:barro_fss_tau_075}]%
{\includegraphics[scale=0.15]{Figures/Barro_fss_Tau_075.eps}}\quad
%%%%
\subfigure[MHE, $\tau=0.75$ \label{fig:barro_mhe_tau_075}]%
{\includegraphics[scale = 0.15]{Figures/Barro_mhe_Tau_075.eps}}\quad
%%%
\subfigure[FHE, $\tau=0.75$ \label{fig:barro_fhe_tau_075}]%
{\includegraphics[scale=0.15]{Figures/Barro_fhe_Tau_075.eps}}
%%%
\caption{\footnotesize{Barro dataset. Penalized splines with 95\% HPD credible sets for the variables: \qmo Male secondary school\qmcsp (MSS, first column), \qmo Female secondary school \qmcsp (FSS, second column), \qmo Male Higher education \qmcsp (MHE, third column) and Female Higher education \qmcsp (FHE, firth column) for five different quantile levels.}
\label{fig:barro_fig_spline}}
\end{figure}
%:::::::::::::::::::::::::::::::::::::::::::::::::::::::::::::::::::::

The parameter estimates of the linear covariates (Table \ref{tab:barro_ex}) are in line with previous studies based on quantile regression methods. In particular, it is worth noting that the coefficients related to the initial per capita GDP is always negative, confirming the neoclassical theory about conditional convergence.
Figures \ref{fig:barro_fig_spline} displays the estimated spline functions along with their credible sets, for three quantile levels $\tau=\left(0.25,0.5,0.75\right)$. A noticeable non linear path is showed for almost all the selected covariates. For a given variable, the sign of each estimated spline varies among different quantile levels suggesting that the importance of different types of education is not the same for countries in the lower and upper tails of the growth conditional distribution. This result is of particular interest since it allows to isolate the positive and negative contributions of each type of education on the rate of economic growth. There are two opposite paths characterizing the effect of secondary schooling and higher education on growth: the first one is increasing in the quantile level $\tau$, the second is decreasing. In particular our estimates suggest that relatively low education levels help countries in the upper tail of the grow distribution while higher education levels boost the rate of growth for countries in the lower tail. Those results can be interpreted in view of the fact that, high and low GDP growth levels are linked with emerging and developed nations respectively. 
The basic schooling is a key factor for emerging nations which base their economies on high labor intensity activities. For these countries, the costs resulting from higher levels of schooling outweigh their returns while the opposite is true for the advanced countries. The latter, indeed, have the possibility to take advantage of skilled labor forces to exploit the higher returns derived from the available technology.

%
%:::::::::::::::::::::::::::::::::::::::::::::::::::::::::::::::::::::
%TABLE: BARRO
%:::::::::::::::::::::::::::::::::::::::::::::::::::::::::::::::::::::
\begin{table}[!t]\centering
\medskip
\tabcolsep=1.0mm
\resizebox{1.0\columnwidth}{!}{%
\begin{tabular}{lcccccccccc}
\toprule
\multicolumn{1}{l}{\multirow{2}{*}{Variable}}& \multicolumn{5}{c}{Quantile levels} \\
\cmidrule(lr){2-6}
 &$\tau=0.10$  &$\tau=0.25$  &$\tau=0.50$&$\tau=0.75$ & $\tau=0.90$ \\
% The body of the table:
\hline
\multicolumn{1}{l}{\multirow{2}{*}{Initial Per Capita GDP}}
&-0.0266  &-0.0270  &-0.0307  &-0.0293  &-0.0322  \\
&{\it(0.0043)}  &{\it(0.0048)}  &{\it(0.0055)}  &{\it(0.0045)}  &{\it(0.0047)}  \\
\multicolumn{1}{l}{\multirow{2}{*}{Life Expectancy}} 
&0.0324  &0.0127  &0.1947  &0.2021  &0.1866  \\
&{\it(0.0091)}  &{\it(0.0104)}  &{\it(0.0110)}  &{\it(0.0097)}  &{\it(0.0092)}  \\
\multicolumn{1}{l}{\multirow{2}{*}{Human Capital}}
&-0.0024  &-0.0037  &-0.0010  &-0.0023  &-0.0018  \\
&{\it(0.0011)}  &{\it(0.0017)}  &{\it(0.0020)}  &{\it(0.0019)}  &{\it(0.0017)}  \\
\multicolumn{1}{l}{\multirow{2}{*}{Education/GDP}}
&-0.2707  &-0.1127  &-0.2956  &-0.2156  &-0.0993  \\
&{\it(0.1246)}  &{\it(0.1579)}  &{\it(0.1690)}  &{\it(0.1721)}  &{\it(0.1732)}  \\
\multicolumn{1}{l}{\multirow{2}{*}{Investment/GDP}}
&0.0999  &0.0930  &0.0359  &0.0343  &0.0482  \\
&{\it(0.0268)}  &{\it(0.0291)}  &{\it(0.0337)}  &{\it(0.0277)} &{\it(0.0264)}  \\
\multicolumn{1}{l}{\multirow{2}{*}{Public Consumption/GDP}}
 &-0.1628  &-0.1723  &0.0052  &0.0251  &-0.0206  \\
&{\it(0.0387)}  &{\it(0.0443)}  &{\it(0.0463)}  &{\it(0.0384)}  &{\it(0.0280)}  \\
\multicolumn{1}{l}{\multirow{2}{*}{Black Market Premium}}
&-0.0227  &-0.0267  &-0.0360  &-0.0319  &-0.0316  \\
&{\it(0.0063)}  &{\it(0.0074)}  &{\it(0.0077)}  &{\it(0.0064)}  &{\it(0.0072)}  \\
\multicolumn{1}{l}{\multirow{2}{*}{Political Instability}}
&-0.0264  &-0.0302  &-0.0153  &-0.0053  &-0.0042  \\
&{\it(0.0083)}  &{\it(0.0088)}  &{\it(0.0103)}  &{\it(0.0098)}  &{\it(0.0073)}  \\
\multicolumn{1}{l}{\multirow{2}{*}{Growth Rate Terms Trade}}
&0.1220  &0.1250  &0.2274  &0.2478  &0.2744  \\
&{\it(0.0366)}  &{\it(0.0504)}  &{\it(0.0652)}  &{\it(0.0639)}  &{\it(0.0569)}  \\

\bottomrule
\end{tabular}}
\caption{\footnotesize{Posterior Means and standard errors (in parenthesis) of the linear regressors for three different quantile levels.}}
\label{tab:barro_ex}
\end{table}
%======================================================
%
%
%:::::::::::::::::::::::::::::::::::::::::::::::::::::::::::::::::::::
% SECTION: CONCLUSION
%:::::::::::::::::::::::::::::::::::::::::::::::::::::::::::::::::::::
\section{Conclusion}
\label{sec:conclusion}
%::::::::::::::::::::::::::::::::::::::::::::::::::::::::::::::::::::::
%
\noindent In this paper we show how the SEP distribution provides a flexible tool to model the conditional quantile of a response variable as a function of exogenous covariates in a Bayesian quantile regression contest. In particular extreme observations are properly accounted by the shape parameter governing the tails decay of the distribution efficiently handling data with outliers or with fat tail--decay. Moreover we extend the linear quantile regression framework  to the GAM one when quantile functions are approximated with splines. In both cases we provide new adaptive Metropolis within Gibbs algorithm in order to implement the statistical inference. Since it is common when building models, that a big number of parameters should be estimated in particular when spline tools are used, in this paper we accommodate the problem of variable selection and shrinking parameters by using the Bayesian version of Lasso penalization methods. In particular we suggest the use of generalized independent Laplace priors on the regressor parameters in the linear case allowing to shrink each parameter separately and a multivariate Laplace distribution on the spline coefficients generalizing the \cite{lang_brezger.2004} second order random walk prior. Finally we show the power of the models considered through simulation and real data set applications where it is evident the flexibility of the quantile methodology proposed in terms of robustness and sparsity.
%
%:::::::::::::::::::::::::::::::::::::::::::::::::::::::::::::::::::::
% SUBSECTION: ACKNOWLEDGMENTS
%:::::::::::::::::::::::::::::::::::::::::::::::::::::::::::::::::::::
\section*{Acknowledgments}
%:::::::::::::::::::::::::::::::::::::::::::::::::::::::::::::::::::::
%
\noindent This research is supported by the Italian Ministry of Research PRIN 2013--2015, ``Multivariate Statistical Methods for Risk Assessment'' (MISURA), and by the ``Carlo Giannini Research Fellowship'', the ``Centro Interuniversitario di Econometria'' (CIdE) and ``UniCredit Foundation''. 
%The authors would like to thank the anonymous reviewers for their helpful and constructive comments that greatly contributed to improving the final version of the paper. They would also like to thank the Editors for their generous comments and support during the review process.
%
%
%:::::::::::::::::::::::::::::::::::::::::::::::::::::::::::::::::::::
% SECTION: APPENDIX
%:::::::::::::::::::::::::::::::::::::::::::::::::::::::::::::::::::::
\section*{Appendix A}
\label{sec:Appendix_proof}
%::::::::::::::::::::::::::::::::::::::::::::::::::::::::::::::::::::::
%
\begin{lemma} Let $Y\sim\mathcal{SEP}\left(\mu,\sigma,\alpha,\tau\right)$, then the $\tau$--level quantile of $Y$ coincides with its natural location parameter, i.e. $Q_\tau\left(Y\right)=\mu$.
\end{lemma}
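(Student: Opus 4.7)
The plan is to show directly that $P(Y \leq \mu) = \tau$ by integrating the left branch of the piecewise density in equation \eqref{eq:sep_pdf}. Since only the region $y \leq \mu$ is involved, the integral reduces to
$$
P(Y\leq\mu)=\int_{-\infty}^{\mu}\frac{1}{\sigma}\kappa_{\sE\sP}(\alpha)\exp\!\left\{-\frac{1}{\alpha}\!\left(\frac{\mu-y}{2\tau\sigma}\right)^{\!\alpha}\right\}dy,
$$
so the computation is essentially the normalizing integral of the Exponential Power kernel, rescaled by the skewness factor $2\tau$.

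First I would apply the substitution $u=(\mu-y)/(2\tau\sigma)$, which maps $(-\infty,\mu]$ to $[0,\infty)$ and introduces a Jacobian factor $2\tau\sigma$. This collapses the expression to
$$
P(Y\leq\mu)=2\tau\,\kappa_{\sE\sP}(\alpha)\int_{0}^{\infty}\exp\!\left\{-\frac{u^{\alpha}}{\alpha}\right\}du.
$$
The $\sigma$ cancels cleanly, which already exhibits that the answer depends only on $\tau$ and $\alpha$ (as it must if $\mu$ is to play the role of the $\tau$-quantile for every scale).

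Next I would evaluate the remaining integral with the further substitution $v=u^{\alpha}/\alpha$, giving $u=(\alpha v)^{1/\alpha}$ and $du=\alpha^{1/\alpha-1}v^{1/\alpha-1}dv$, so that the integral becomes $\alpha^{1/\alpha-1}\Gamma(1/\alpha)$. Applying the functional equation $\Gamma(1/\alpha)=\alpha\,\Gamma(1+1/\alpha)$ turns this into $\alpha^{1/\alpha}\Gamma(1+1/\alpha)$, which is exactly the reciprocal of $2\kappa_{\sE\sP}(\alpha)$ defined in the statement. Substituting back yields $P(Y\leq\mu)=2\tau\,\kappa_{\sE\sP}(\alpha)\cdot\tfrac{1}{2\kappa_{\sE\sP}(\alpha)}=\tau$, which completes the argument.

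I do not expect any real obstacle here: the proof is essentially a verification that the normalizing constant $\kappa_{\sE\sP}(\alpha)$ was chosen to make the mass on the left branch equal to $\tau$. The one step requiring a small amount of care is keeping track of the two substitutions and the $\Gamma$-function identity; the rest is bookkeeping. As a sanity check I would also note that the symmetric computation on $(\mu,\infty)$ gives mass $1-\tau$, so the density integrates to one and $\mu$ is indeed the unique $\tau$-quantile whenever the density is positive in a neighbourhood of $\mu$.
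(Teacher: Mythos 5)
Your proof is correct and follows essentially the same route as the paper's: integrate the left branch of the density, reduce it via a power substitution to a Gamma integral, and invoke $\Gamma(1+1/\alpha)=\tfrac{1}{\alpha}\Gamma(1/\alpha)$ to cancel the normalizing constant and leave $\tau$. The only difference is cosmetic bookkeeping (you absorb $\sigma$ and $2\tau$ in the first substitution, whereas the paper sets $\mu=0$, $\sigma=1$ without loss of generality).
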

\begin{proof}
In order to show that $P\left(Y \leq \mu\right) = \tau $ we compute the cdf of a SEP in $y = \mu$
%%%
\begin{align}
P\left(Y \leq \mu \right) &= \int_{- \infty}^\mu \frac{1}{2\sigma}\frac{\alpha^{-\frac{1}{\alpha}}}{\Gamma\left(1+\frac{1}{\alpha}\right)} \left[\exp\left\{-\frac{1}{\alpha}\left(\frac{\mu-y}{2\tau\sigma}\right)^{\alpha}\right\} \bbone_{\left(y\leq\mu\right)}\right.\nonumber\\
%%%
&\qquad\qquad\qquad\left.+\exp\left\{-\frac{1}{\alpha}\left(\frac{y-\mu}{2\left(1-\tau\right)\sigma}\right)^{\alpha} \right\} \bbone_{\left(y>\mu\right)} \right]dy \nonumber\\
&= \int_{- \infty}^\mu \frac{1}{2\sigma}\frac{\alpha^{-\frac{1}{\alpha}}}{\Gamma\left(1+\frac{1}{\alpha}\right)} \exp\left\{-\frac{1}{\alpha}\left(\frac{\mu-y}{2\tau\sigma}\right)^{\alpha}\right\}dy
\end{align}
Without loss of generality, let us consider the case when $\mu=0$ and $\sigma=1$. The integral reduces to
\begin{equation}
\frac{\alpha^{-\frac{1}{\alpha}}}{2\Gamma\left(1+\frac{1}{\alpha}\right)} \int_{- \infty}^0 \exp\left\{-\frac{1}{\alpha}\left(\frac{-y}{2\tau}\right)^{\alpha}\right\}dy.
\end{equation}
By substitute $\left(- y\right)^\alpha = x$ we have
\begin{equation}
\frac{\alpha^{-\frac{1}{\alpha}}}{2\Gamma\left(1+\frac{1}{\alpha}\right)} \int_{0}^{+ \infty} \exp\left\{-\frac{x}{\alpha\left(2\tau\right)^{\alpha}}\right\} \frac{1}{\alpha} \left(x\right)^{\frac{1}\alpha - 1} dx
\label{gamma_kernel}
\end{equation}
Rearranging equation \eqref{gamma_kernel} and recognizing the kernel of a Gamma pdf with shape $1/\alpha$ and scale $\alpha\left(2\tau\right)^{\alpha}$ the integral becomes 
\begin{equation*}
\frac{\alpha^{-\frac{1}{\alpha}}}{2\Gamma\left(1+\frac{1}{\alpha}\right)} \frac{1}{\alpha} \Gamma{\left(\frac{1}{\alpha}\right)} \left(\alpha\left(2\tau\right)^\alpha\right)^{\frac{1}{\alpha}}.
\end{equation*}
By using the property $\Gamma{\left(x + 1\right)} = x\Gamma{\left(x\right)}$ all the terms simplify except for $\tau$, concluding the proof.
\end{proof}
\clearpage

%==============================================================
% SUBSECTION: BIBLIOGRAPHY
%==============================================================
\bibliographystyle{apalike}
\bibliography{bqr_biblio}

\begin{thebibliography}{}

\bibitem[Alhamzawi et~al., 2012]{alhamzawi_etal.2012}
Alhamzawi, R., Yu, K., and Benoit, D.~F. (2012).
\newblock Bayesian adaptive lasso quantile regression.
\newblock {\em Statistical Modelling}, 12(3):279--297.

\bibitem[Andrieu and Moulines, 2006]{andrieu_etal.2006}
Andrieu, C. and Moulines, r. (2006).
\newblock On the ergodicity properties of some adaptive mcmc algorithms.
\newblock {\em The Annals of Applied Probability}, 16(3):1462--1505.

\bibitem[Andrieu and Thoms, 2008]{andrieu_thoms.2008}
Andrieu, C. and Thoms, J. (2008).
\newblock A tutorial on adaptive mcmc.
\newblock {\em Statistics and Computing}, 18(4):343--373.

\bibitem[Barreto and Hughes, 2004]{barreto_hughes.2004}
Barreto, R.~A. and Hughes, A.~W. (2004).
\newblock Under performers and over achievers: A quantile regression analysis
  of growth.
\newblock {\em Economic Record}, 80(248):17--35.

\bibitem[Barro and i~Martin, 1995]{barro_martin.1995}
Barro, R. and i~Martin, X.~S. (1995).
\newblock {\em Economic Growth}.
\newblock McGraw-Hill, New York.

\bibitem[Barro, 1991]{barro.1991}
Barro, R.~J. (1991).
\newblock Economic growth in a cross section of countries.
\newblock {\em The Quarterly Journal of Economics}, 106(2):407--443.

\bibitem[Bernardi et~al., 2015]{bernardi_etal.2015}
Bernardi, M., Gayraud, G., and Petrella, L. (2015).
\newblock Bayesian tail risk interdependence using quantile regression.
\newblock {\em Bayesian Analysis}, 10(3):pp. 553--603.

\bibitem[Box and Tiao, 1973]{box_tiao.1973}
Box, G. and Tiao, G. (1973).
\newblock {\em Bayesian inference in statistical analysis}.
\newblock Addison-Wesley series in behavioral science: quantitative methods.
  Addison-Wesley Pub. Co.

\bibitem[Brezger and Lang, 2006]{brezger_lang_2006}
Brezger, A. and Lang, S. (2006).
\newblock Generalized structured additive regression based on bayesian
  p-splines.
\newblock {\em Computational Statistics and Data Analysis}, 50(4):967 -- 991.

\bibitem[Brezger and Steiner, 2008]{brezger_steiner.2008}
Brezger, A. and Steiner, W.~J. (2008).
\newblock Monotonic regression based on bayesian p-splines: An application to
  estimating price response functions from store-level scanner data.
\newblock {\em Journal of Business and Economic Statistics}, 26(1):pp. 90--104.

\bibitem[Chen and Yu, 2009]{chen_yu.2009}
Chen, C. and Yu, K. (2009).
\newblock Automatic bayesian quantile regression curve fitting.
\newblock {\em Statistics and Computing}, 19(3):271--281.

\bibitem[Choy and Walker, 2003]{choy_walker.2003}
Choy, S. and Walker, S.~G. (2003).
\newblock The extended exponential power distribution and bayesian robustness.
\newblock {\em Statistics \& Probability Letters}, 65(3):227 -- 232.

\bibitem[Choy and Smith, 1997]{choy_etal.1997}
Choy, S. T.~B. and Smith, A. F.~M. (1997).
\newblock On robust analysis of a normal location parameter.
\newblock {\em Journal of the Royal Statistical Society. Series B
  (Methodological)}, 59(2):pp. 463--474.

\bibitem[Denison et~al., 1998]{denison_etal.1998}
Denison, D. G.~T., Mallick, B.~K., and Smith, A. F.~M. (1998).
\newblock Automatic bayesian curve fitting.
\newblock {\em Journal of the Royal Statistical Society: Series B (Statistical
  Methodology)}, 60(2):333--350.

\bibitem[Dunson and Taylor, 2005]{dunson_taylor.2005}
Dunson, D.~B. and Taylor, J.~A. (2005).
\newblock Approximate bayesian inference for quantiles.
\newblock {\em Journal of Nonparametric Statistics}, 17(3):385--400.

\bibitem[Eilers et~al., 1996]{eilers_marx.1996}
Eilers, P. H.~C., Rijnmond, D.~M., and Marx, B.~D. (1996).
\newblock Flexible smoothing with b-splines and penalties.
\newblock {\em Statistical Science}, 11:89--121.

\bibitem[Fernandez and Steel, 1998]{fernandez_steel.1998}
Fernandez, C. and Steel, M. F.~J. (1998).
\newblock On bayesian modeling of fat tails and skewness.
\newblock {\em Journal of the American Statistical Association}, 93(441):pp.
  359--371.

\bibitem[Hans, 2009]{hans.2009}
Hans, C. (2009).
\newblock Bayesian lasso regression.
\newblock {\em Biometrika}, 96(4):835--845.

\bibitem[Hastie and Tibshirani, 1986]{hastie_tibshirani.1986}
Hastie, T. and Tibshirani, R. (1986).
\newblock Generalized additive models.
\newblock {\em Statistical Science}, 1(3):pp. 297--310.

\bibitem[Jr. and Rubinfeld, 1978]{harrison_rubinfeld.1978}
Jr., D.~H. and Rubinfeld, D.~L. (1978).
\newblock Hedonic housing prices and the demand for clean air.
\newblock {\em Journal of Environmental Economics and Management}, 5(1):81 --
  102.

\bibitem[Kammann and Wand, 2003]{kammann_wand.2003}
Kammann, E.~E. and Wand, M.~P. (2003).
\newblock Geoadditive models.
\newblock {\em Journal of the Royal Statistical Society: Series C (Applied
  Statistics)}, 52(1):1--18.

\bibitem[Kobayashi, 2016]{kobayashi.2016}
Kobayashi, G. (2016).
\newblock Skew exponential power stochastic volatility model for analysis of
  skewness, non-normal tails, quantiles and expectiles.
\newblock {\em Computational Statistics}, 31(1):49--88.

\bibitem[Koenker, 2005]{koenker.2005}
Koenker, B. (2005).
\newblock {\em Quantile Regression}.
\newblock Cambridge University Press, Cambridge.

\bibitem[Koenker and Basset, 1978]{koenker_basset.1978}
Koenker, B. and Basset, G. (1978).
\newblock Regression quantiles.
\newblock {\em Econometrica}, 46:33--50.

\bibitem[Koenker and Machado, 1999]{koenker_machado.1999}
Koenker, R. and Machado, J.~A. (1999).
\newblock Goodness of fit and related inference processes for quantile
  regression.
\newblock {\em Journal of the american statistical association},
  94(448):1296--1310.

\bibitem[Kottas and Gelfand, 2001]{kottas_gelfand.2001}
Kottas, A. and Gelfand, A. (2001).
\newblock Bayesian semiparametric median regression modeling.
\newblock {\em {Journal of the American Statistical Association}},
  96:1458--1468.

\bibitem[Kottas and Krnjajic, 2009]{kottas_krnjajic.2009}
Kottas, A. and Krnjajic, M. (2009).
\newblock Bayesian semiparametric modelling in quantile regression.
\newblock {\em {Scandinavian Journal of Statistics}}, 36:297--319.

\bibitem[Kotz et~al., 2001]{kotz.2001}
Kotz, S., Kozubowski, T.~J., and Podg{\'o}rski, K. (2001).
\newblock Asymmetric multivariate laplace distribution.
\newblock In {\em The Laplace Distribution and Generalizations}, pages
  239--272. Springer.

\bibitem[Kozumi and Kobayashi, 2011]{kozumi_kobayashi.2011}
Kozumi, H. and Kobayashi, G. (2011).
\newblock Gibbs sampling methods for bayesian quantile regression.
\newblock {\em {Journal of Statistical Computation and Simulation}},
  81:1565--1578.

\bibitem[Lang and Brezger, 2004]{lang_brezger.2004}
Lang, S. and Brezger, A. (2004).
\newblock Bayesian p-splines.
\newblock {\em Journal of Computational and Graphical Statistics}, 13(1):pp.
  183--212.

\bibitem[Laurini, 2007]{laurini.2007}
Laurini, M.~P. (2007).
\newblock A note on the use of quantile regression in beta convergence
  analysis.
\newblock {\em Economics Bullettin}, 3(52):1--8.

\bibitem[Li et~al., 2010]{li_etal.2010}
Li, Q., Xi, R., and Lin, N. (2010).
\newblock Bayesian regularized quantile regression.
\newblock {\em Bayesian Anal.}, 5(3):533--556.

\bibitem[Lum and Gelfand, 2012]{lum_gelfand.2012}
Lum, K. and Gelfand, A. (2012).
\newblock Spatial quantile multiple regression using the asymmetric laplace
  process.
\newblock {\em {Bayesian Analysis}}, 7:235--258.

\bibitem[Mayr et~al., 2012]{mayr_etal.2012}
Mayr, A., Fenske, N., Hofner, B., Kneib, T., and Schmid, M. (2012).
\newblock Generalized additive models for location, scale and shape for high
  dimensional data—a flexible approach based on boosting.
\newblock {\em Journal of the Royal Statistical Society: Series C (Applied
  Statistics)}, 61(3):403--427.

\bibitem[Meier et~al., 2008]{meier_etal.2008}
Meier, L., Van De~Geer, S., and Bühlmann, P. (2008).
\newblock The group lasso for logistic regression.
\newblock {\em Journal of the Royal Statistical Society: Series B (Statistical
  Methodology)}, 70(1):53--71.

\bibitem[Mello and Perrelli, 2003]{mello_perrelli.2003}
Mello, M. and Perrelli, R. (2003).
\newblock Growth equations: a quantile regression exploration.
\newblock {\em The Quarterly Review of Economics and Finance}, 43(4):643 --
  667.
\newblock Capital Accumulation and Allocation in Economic Growth.

\bibitem[Naranjo et~al., 2015]{naranjo_etal.2015}
Naranjo, L., P{\'e}rez, C.~J., and Mart{\'i}n, J. (2015).
\newblock Bayesian analysis of some models that use the asymmetric exponential
  power distribution.
\newblock {\em Statistics and Computing}, 25(3):497--514.

\bibitem[Newey and Powell, 1987]{newey_powell.1987}
Newey, W. and Powell, J. (1987).
\newblock Asymmetric least squares estimation and testing.
\newblock {\em Econometrica}, 55:819--847.

\bibitem[O'Sullivan, 1986]{osullivan.1986}
O'Sullivan, F. (1986).
\newblock A statistical perspective on ill-posedinverse problems (with
  discussion).
\newblock {\em Statistical Science}, 1:505--527.

\bibitem[O’Sullivan, 1988]{osullivan.1988}
O’Sullivan, F. (1988).
\newblock Fast computation of fully automated log-density and log-hazard
  estimators.
\newblock {\em SIAM Journal on Scientific and Statistical Computing},
  9(2):363--379.

\bibitem[Park and Casella, 2008]{park_casella.2008}
Park, T. and Casella, G. (2008).
\newblock The bayesian lasso.
\newblock {\em {Journal of the American Statistical Association}},
  103:681--686.

\bibitem[Roberts and Rosenthal, 2007]{roberts_rosenthal.2007}
Roberts, G.~O. and Rosenthal, J.~S. (2007).
\newblock Coupling and ergodicity of adaptive markov chain monte carlo
  algorithms.
\newblock {\em Journal of Applied Probability}, 44(2):pp. 458--475.

\bibitem[Rue and Held, 2005]{rue_held.2005}
Rue, H. and Held, L. (2005).
\newblock {\em Gaussian Markov random fields: theory and applications}.
\newblock CRC Press.

\bibitem[Salazar et~al., 2012]{salazar_etal.2012}
Salazar, E., Ferreira, M., and Migon, H. (2012).
\newblock Objective bayesian analysis for exponential power regression models.
\newblock {\em Sankhya B}, 74(1):107--125.

\bibitem[Sriram et~al., 2013]{sriram_etal.2013}
Sriram, K., Ramamoorthi, R., and Ghosh, P. (2013).
\newblock Posterior consistency of bayesian quantile regression based on the
  misspecified asymmetric laplace density.
\newblock {\em {Bayesian Analysis}}, 8:479--504.

\bibitem[Thomas~Kneib, 2011]{kneib_etal.2011}
Thomas~Kneib, Susanne~Konrath, L.~F. (2011).
\newblock High dimensional structured additive regression models: Bayesian
  regularization, smoothing and predictive performance.
\newblock {\em Journal of the Royal Statistical Society. Series C (Applied
  Statistics)}, 60(1):51--70.

\bibitem[Thompson et~al., 2010]{thompson_etal.2010}
Thompson, P., Cai, Y., Moyeed, R., Reeve, D., and Stander, J. (2010).
\newblock Bayesian nonparametric quantile regression using splines.
\newblock {\em Computational Statistics and Data Analysis}, 54(4):1138 -- 1150.

\bibitem[Tibshirani, 1996]{tibshirani.1996}
Tibshirani, R. (1996).
\newblock Regression shrinkage and selection via the lasso.
\newblock {\em Journal of the Royal Statistical Society, Series B},
  58:267--288.

\bibitem[Tokdar and Kadane, 2012]{tokdar_kadane.2012}
Tokdar, S. and Kadane, J. (2012).
\newblock Using exponentially weighted quantile regression to estimate value at
  risk and expected shortfall.
\newblock {\em {Journal of Financial Economics}}, 6:382--406.

\bibitem[Wichitaksorn et~al., 2014]{wichitaksorn_etal.2014}
Wichitaksorn, N., Choy, S. T.~B., and Gerlach, R. (2014).
\newblock A generalized class of skew distributions and associated robust
  quantile regression models.
\newblock {\em Canadian Journal of Statistics}, 42(4):579--596.

\bibitem[Yu and Moyeed, 2001]{yu_moyeed.2001}
Yu, K. and Moyeed, R. (2001).
\newblock Bayesian quantile regression.
\newblock {\em {Statistics \& Probability Letters}}, 54:437--447.

\bibitem[Yuan and Lin, 2006]{yuan_lin.2006}
Yuan, M. and Lin, Y. (2006).
\newblock Model selection and estimation in regression with grouped variables.
\newblock {\em Journal of the Royal Statistical Society: Series B (Statistical
  Methodology)}, 68(1):49--67.

\bibitem[Yue and Rue, 2011]{yue_rue.2011}
Yue, Y.~R. and Rue, H. (2011).
\newblock Bayesian inference for additive mixed quantile regression models.
\newblock {\em Comput. Stat. Data Anal.}, 55(1):84--96.

\bibitem[Zhu and Zinde-Walsh, 2009]{zhu_zinde.2009}
Zhu, D. and Zinde-Walsh, V. (2009).
\newblock Properties and estimation of asymmetric exponential power
  distribution.
\newblock {\em Journal of Econometrics}, 148(1):86 -- 99.

\end{thebibliography}
%==============================================================
\end{document}